\newtheorem{theorem}{Theorem}[section]
\newtheorem{lemma}[theorem]{Lemma}
\newtheorem{proposition}[theorem]{Proposition}
\newtheorem{corollary}[theorem]{Corollary}
\newdefinition{example}{Example}
\newdefinition{remark}{Remark}
\newproof{proof}{Proof}
\DeclareMathOperator\rank{rank}
\DeclareMathOperator\nrank{n\hspace{.5pt}rank}
\DeclareMathOperator\im{Im}
\DeclareMathOperator\real{Re}
\DeclareMathOperator\spec{spec}
\DeclareMathOperator\diag{diag}
\newcommand\abs[1]{\ensuremath{\lvert#1\rvert}}
\newcommand\sigmin[1]{\ensuremath{\underline{\sigma\mkern-3mu}\mkern3mu}\!\left(#1\right)}
\newcommand\mmatrix[2][cccccc]{\ensuremath{\left[\begin{array}{#1}#2\end{array}\right]}}
\newcommand\smat[1]{\ensuremath{\bigl[\begin{smallmatrix}#1\end{smallmatrix}\bigr]}}%
\newcommand\pdiri[2]{\ensuremath{\mathrm{pdir_i}\!\left(#1,#2\right)}}
\newcommand\pdiro[2]{\ensuremath{\mathrm{pdir_o}\!\left(#1,#2\right)}}
\newcommand\zdiri[2]{\ensuremath{\mathrm{zdir_i}\!\left(#1,#2\right)}}
\newcommand\zdiro[2]{\ensuremath{\mathrm{zdir_o}\!\left(#1,#2\right)}}
\newcounter{cAss}
\newcounter{cAssSaved}
\newcommand\Ass[1]{\ensuremath{\boldsymbol{\mathcal A}_{\text{\hspace{0.75pt}\bf#1}}}}
\newlength\asswidth
\newenvironment{assumptions}%
 {\begin{list}{\Ass{\arabic{cAss}}:}{%
 \setcounter{cAssSaved}{\thecAss}\usecounter{cAss}\setcounter{cAss}{\thecAssSaved}
 \setlength\topsep{.667ex plus 2pt minus 2pt}
 \setlength\itemsep\topsep
 \settowidth\asswidth{\Ass{\arabic{cAss}}:}
 \addtolength\asswidth\labelsep
 \setlength\leftmargin\asswidth
 \setlength\labelwidth\asswidth}}%
 {\end{list}}
\renewcommand\cite{\citep}
\begin{document}

\begin{frontmatter}

\title{On the Internal Stability of Diffusively Coupled Multi-Agent Systems\\ and the Dangers of Cancel Culture\tnoteref{footnoteinfo}} 

\tnotetext[footnoteinfo]{Supported by the Israel Science Foundation (grants 3177\hspace{.5pt}/21 and 2285\hspace{.5pt}/20) and Sakranut Graydah at the Technion.}

\author[ME]{Gal Barkai}    \ead{galbarkai@campus.technion.ac.il}
\author[ME]{Leonid Mirkin} \ead{mirkin@technion.ac.il}
\author[AE]{Daniel Zelazo} \ead{dzelazo@technion.ac.il}

\address[ME]{Faculty of Mechanical Engineering, Technion---IIT, Haifa 3200003, Israel}
\address[AE]{Faculty of Aerospace Engineering, Technion---IIT, Haifa 3200003, Israel}

\begin{abstract}
 We study internal stability in the context of diffusively-coupled control architectures, common in multi-agent systems (i.e.\ the celebrated consensus protocol), for linear time-invariant agents. We derive a condition under which the system can not be stabilized by any controller from that class. In the finite-dimensional case the condition states that diffusive controllers cannot stabilize agents that share common unstable dynamics, directions included. This class always contains the group of homogeneous unstable agents, like integrators. We argue that the underlying reason is intrinsic cancellations of unstable agent dynamics by such controllers, even static ones, where directional properties play a key role. The intrinsic lack of internal stability explains the notorious behavior of some distributed control protocols when affected by measurement noise or exogenous disturbances.
\end{abstract}

\begin{keyword}
 Multi-agent systems, controller constraints and structure, stability.
\end{keyword}

\end{frontmatter}


\section{Introduction} \label{sec:intro}

A multi-agent system (MAS) is a collection of independent systems (agents) coupled via the pursuit of a common goal. In large-scale MASs the information exchange between agents might be costly. As such, it is commonly limited to a subset of agents, known as \emph{neighbors}. Control laws that use only information from neighboring agents are called \emph{distributed}.

This work studies a class of distributed control laws, where only \emph{relative} measurements are exchanged between neighbors. In other words, each agent has access only to the difference between its output and that of each of its neighbours. Such control laws are called \emph{diffusive} and systems controlled by them are known as \emph{diffusively coupled}. Diffusive control laws are common in the MAS literature.  Relative sensing appears naturally in MAS tasks, where absolute measurements are hard to obtain, such as space and aerial exploration and sensor localization, see \cite{SH:05,KKM:09,ZM:11rsn} and the references therein. The consensus and synchronization problems are well-known examples of diffusively coupled systems \cite{O-SFM:07,WSA:11}.

However, diffusively-coupled systems behave poorly when affected by disturbances and noise. Measurement noise rapidly deteriorates performance \cite[\S III-A]{ZM:11} and even dynamic controllers can hardly attenuate disturbances \cite{Ding:15}. To cope with the difficulties, different relaxing assumptions are assumed. Some allow for non-relative state \cite{YE:12} or output \cite{MG:19} measurements, while others employ an undisturbed leader \cite{Ding:15} or impose limitations even on bounded disturbances \cite[Prop.\,5]{BDeP:15}. Despite these different assumptions, if they fail, the resulting trajectories exhibit certain common traits that can be associated with instability. These traits can be illustrated by the classical consensus protocol, considered below for a set of integrator agents and with a static interaction network.

\subsection{Motivating example} \label{sec:intro:mot}

Reaching agreement between autonomous agents is a fundamental building block in multi-agent coordination \cite{RB:08}. In its simplest form it studies a group of independent integrator agents $\dot x_i(t)=u_i(t)$, where $x_i$ and $u_i$ are their states and control inputs, respectively. The goal is to reach asymptotic agreement between all agents, in the sense that
\begin{equation} \label{eq:agreement}
 \lim_{t\to\infty}\bigl(x_i(t)-x_j(t)\bigr)=0,\quad\forall i,j,
\end{equation}
under the constraint that the $i$th agent has access only to states of its neighbors, whose indices belong to a set $\mathcal N_i$. This problem can be solved by the celebrated consensus protocol \cite{O-SFM:07}, which is a diffusive state-feedback of the form
\begin{equation}\label{eq:conprot}
 u_i(t)=-\sum_{\mathclap{j\in\mathcal N_i}}\bigl(x_i(t)-x_j(t)\bigr),\quad\forall i.
\end{equation}
If certain connectivity conditions on the communication topology hold (i.e.\ connectedness), then the control law \eqref{eq:conprot} drives the agents to agreement exponentially fast \cite[Ch.\,3]{ME:10}. The state trajectories of four agents controlled by \eqref{eq:conprot} are shown in Fig.\,\ref{fig:stepD} in the time interval $[0,t_d]$. Observe that on this time interval the states converge exponentially to the average of their initial conditions and the control signals all asymptotically vanish.

\begin{figure}[!t]
 \centering
  \subfigure[The states response.]{\label{fig:stepDx}\includegraphics[width=0.48\columnwidth,clip]{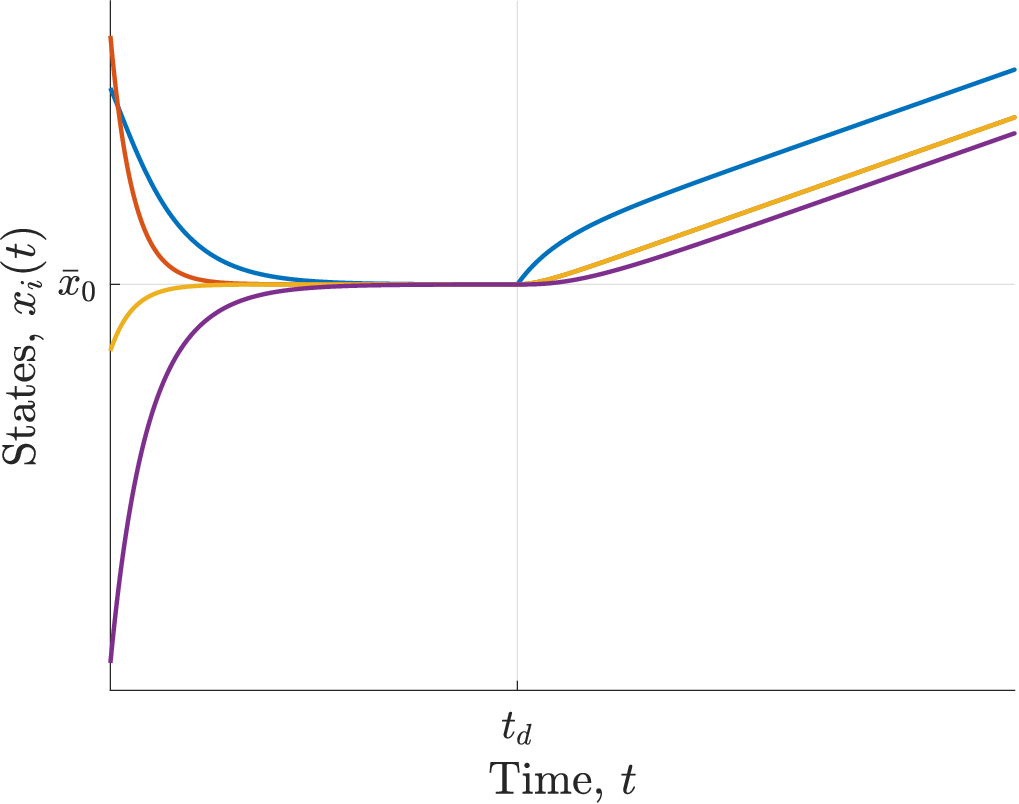}}
  \hspace{\stretch1}
  \subfigure[The control signals response.]{\label{fig:stepDu}\includegraphics[width=0.48\columnwidth,clip]{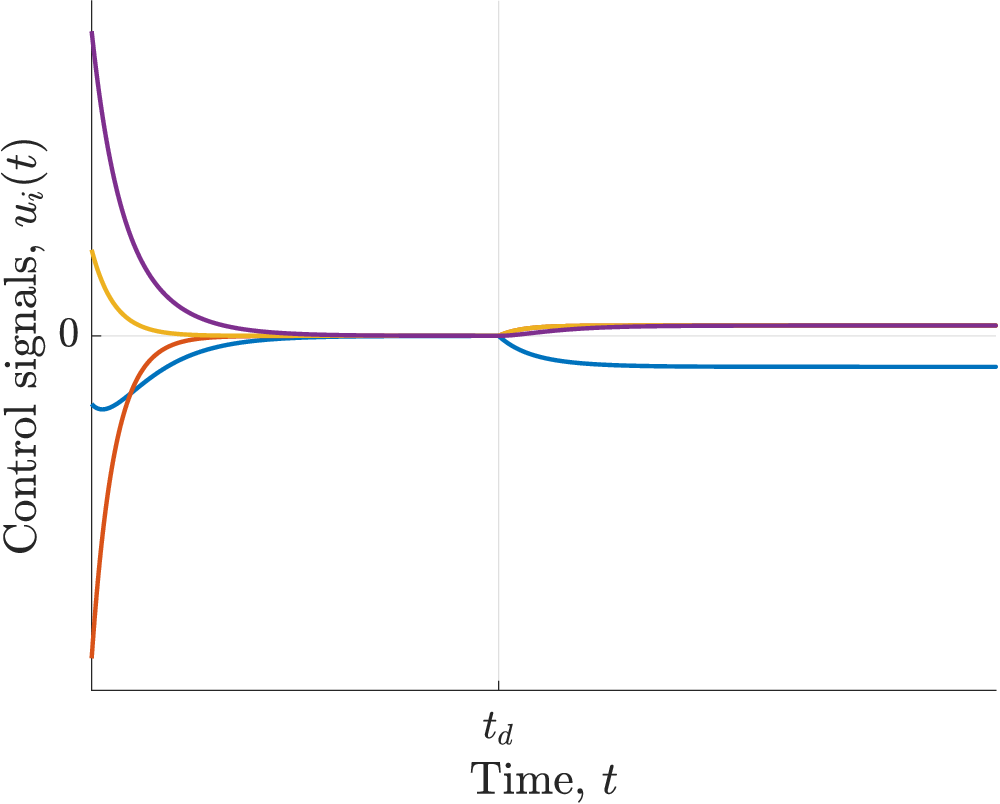}
  }
 \caption{Simulation of protocol \eqref{eq:conprot} perturbed by a step at $t=t_d$.} \label{fig:stepD}
\end{figure}

This might no longer be the case if the agents are affected by load disturbances $d_i$, viz.
\begin{equation} \label{eq:distcons}
 \dot x_i(t)=u_i(t)+d_i(t).
\end{equation}
An example of what happens in such situations is also shown in Fig.\,\ref{fig:stepD}.  At the time instance $t=t_d$ one agent is affected by a unit step disturbance. As a result, all states cease to agree and start to diverge when $t>t_d$, whereas the control signals reach non-zero steady-state values.

The apparent instability of the whole system, manifested in the unboundedness of the states, can be explained by the well-known fact that the consensus protocol has a closed-loop eigenvalue at the origin \cite{O-SFM:07}. Nevertheless, the boundedness of the control signals under such conditions is intriguing. Situations wherein some signals in the closed-loop system are bounded while some others are not normally indicate unstable \emph{pole-zero cancellations} in the feedback loop \cite[Sec.\,5.3]{ZDG:95}. However, controller \eqref{eq:conprot} is static and thus has no zeros.

\subsection{Contribution}

The example above suggests that a deeper inspection of the \emph{internal stability} property could offer insight into the behavior of diffusively-coupled systems. The internal stability of any feedback interconnection requires the stability of all possible input\,/\,output relations in the system, see \cite{ZDG:95,SkP:05}. However, to the best of our knowledge, internal stability has not been explicitly studied in the context of diffusively-coupled architectures of MASs yet.

In this paper we show that diffusively-coupled systems of LTI (linear time-invariant) agents might not be internally stabilizable. Loosely speaking, this happens if the agents share common unstable dynamics, directions counting. This, for example, is always the case in a group of homogeneous unstable agents, like those discussed in \S\ref{sec:intro:mot}.

When restricting the result to finite-dimensional agents, we also explain the mechanism behind the shown internal instability. It is caused by \emph{unstable cancellations} in the cascade of the aggregate plant and a diffusive controller. Important is that these cancellations are caused not by controller zeros, but rather by an intrinsic spatial deficiency of the diffusively-coupled configuration. These cancellations are intrinsic to the diffusive structure and cannot be affected by controller dynamics. Consequently, the internal stability of feedback systems utilizing only relative measurements depends solely on the agent dynamics.

In addition to providing a rigorous analysis of the internal stability of diffusively-coupled systems, we show how the analysis is readily applied to common extensions found in the literature.  In particular, we discuss more general symmetrically coupled multi-agent systems (i.e.\ not restricted to only diffusive coupling), asymmetric coupling (i.e.\ MASs over directed graphs), unstable systems with no closed right-half plane poles, and MASs over time-varying networks. Numerous examples are also provided along the way to illustrate the main results.

The paper is organized as follows. The problem is set up in Section~\ref{sec:setup} and the main result is presented in Section~\ref{sec:stability}, with several generalizations discussed in \S\ref{sec:ext}. Section~\ref{sec:fd} addresses the case of finite-dimensional agents, reformulating the main result in a more transparent form and revealing the underlying reason for the reported behavior.  Concluding remarks are provided in Section~\ref{sec:concr}. Two appendices collect definitions and technical results about coprime factorizations over $H_\infty$ and poles and zero directions of multivariable real-rational transfer functions.

\subsubsection*{Notation}

The sets of integer, real, and complex numbers are $\mathbb Z$, $\mathbb R$, and $\mathbb C$, respectively, with subsets $\mathbb N_\nu\coloneq\{i\in\mathbb Z\mid1\le i\le\nu\}$, $\mathbb C_0\coloneq\{s\in\mathbb C\mid\real s>0\}$, and $\bar{\mathbb C}_0\coloneq\{s\in\mathbb C\mid\real s\ge0\}$. By $I_\nu$ and $\mathbb1_\nu$ we denote the $\nu\times\nu$ identity matrix and $\nu$-dimensional vector of ones, respectively. When the dimension is immaterial or clear from context, we use $I$ and $\mathbb1$. The complex-conjugate transpose of a matrix $A$ is denoted by $A^\top$, the set of all its eigenvalues by $\spec(A)$, and its minimal singular value by $\sigmin A$. The notation $\diag\{A_i\}$ stands for a block-diagonal matrix with diagonal elements $A_i$. The image (range) and kernel (null) spaces of a matrix $A$ are notated $\im A$ and $\ker A$, respectively. Given two matrices $A$ and $B$, $A\otimes B$ denotes their Kronecker product.

By the stability of a system $G$ we understand its $L_2$-stability. It is known \cite[\S A.6.3]{CZw:20} that a $p\times m$ LTI system is causal and stable iff its transfer function $G(s)$ belongs to $H_\infty^{p\times m}$, which is the space of holomorphic and bounded functions $\mathbb C_0\to\mathbb C^{p\times m}$ (we write $H_\infty$ when the dimensions are clear). Given a real-rational transfer function $G(s)$, its McMillan degree is denoted by $\deg(G)$. By $\nrank G(s)$ we understand the normal rank of a function $G(s)$.

A digraph $\mathcal G=(\mathcal V,\mathcal E)$ consists of a vertex set $\mathcal V$ and an edge set $\mathcal E\subset\mathcal V\times\mathcal V$, see \cite{GR:01} for more details. The (oriented) incidence matrix of $\mathcal G$ is denoted by $E_\mathcal G$ or simply $E$ when the association with a concrete graph is clear. It is a $\abs{\mathcal V}\times\abs{\mathcal E}$ matrix, whose $(i,j)$ entry is
\[
 [E_\mathcal G]_{ij}=
  \begin{cases}
   1 & \text{if vertex $i$ is the head of edge $j$} \\
  -1 & \text{if vertex $i$ is the tail of edge $j$} \\
   0 & \text{if vertex $i$ does not belong to edge $j$}
  \end{cases}.
\]
Note that the construction of the incidence matrix implies that $\mathbb1^\top E_\mathcal G=0$ for every $\mathcal G$.

\section{Problem formulation} \label{sec:setup}

Consider $\nu$ continuous-time LTI agents $P_i$, each with $m$ inputs and $p$ outputs, who interact over a graph $\mathcal G$ with $\nu$ nodes and $\mu$ edges. In this formalism, agents $i$ and $j$ are neighbors, in the sense defined in the Introduction, if they are incident to the same edge.

\begin{figure}[t]
 \centering\includegraphics[scale=.667]{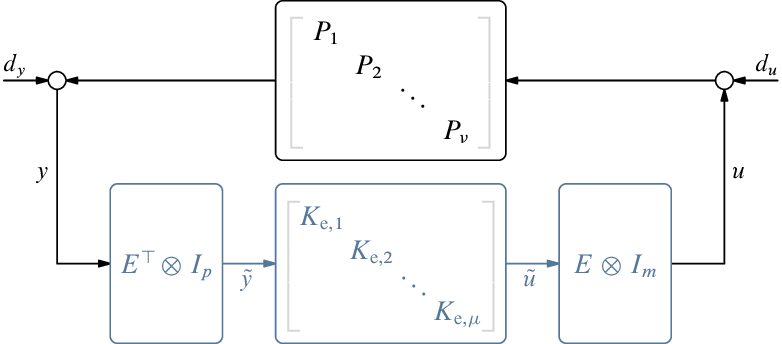}
 \caption{Diffusively-coupled feedback setup ($E$ is the incidence matrix of the connectivity graph $\mathcal G$)} \label{fig:BD_cl}
\end{figure}

A general diffusively-coupled MAS originated in \cite{Ar:07}, also known as the canonical cooperative control structure \cite[Ch.\,9]{B:22}, is presented in Fig.\,\ref{fig:BD_cl}. It comprises the block-diagonal aggregate plant $P\coloneq\diag\{P_i\}$ with $\nu$ blocks, a block-diagonal edge controller $K_\text e\coloneq\diag\{K_{\text e,j}\}$ with $\mu$ blocks, and pre- and post-processing based on the incidence matrix $E$ associated with $\mathcal G$. To describe the logic of this setup we may disregard the exogenous signals $d_y$ and $d_u$ for the time being. The overall controller $K:y\mapsto u$ is thus defined as
\begin{equation} \label{eq:K}
 K\coloneq(E\otimes I_m)K_\text e(E^{\top\!}\otimes I_p).
\end{equation}

We now discuss how the controller $K$ processes signals.
\begin{itemize}
\item The $(\nu p)$-dimensional aggregate output of the agents, $y$, is first processed by the transpose of the incidence matrix to produce a $(\mu p)$-dimensional vector $\tilde y=(E^\top\otimes I_p)y$ representing the relative outputs of neighbouring agents.
\item Each component of $\tilde y$, which is the relative measured coordinate along one edge, is then processed independently by an edge controller $K_{\text e,j}$, to produce a $(\mu m)$-dimensional ``edge correction'' signal $\tilde u$.
\item The $(\nu m)$-dimensional aggregate control signal $u$ is then produced by processing all $\tilde u_j$ by the incidence matrix, which sums up edge corrections for all edges connected to the corresponded node.
\end{itemize}
For example, if $\mathcal G$ is an undirected star graph on three nodes with node~3 as its center, then we can choose
\[
 E=\mmatrix{1&0\\0&1\\-1&-1},
\]
in which case
\[
 \tilde y=\mmatrix{y_1-y_3\\y_2-y_3}
 \quad\text{and}\quad
 u=\mmatrix{\tilde u_1\\\tilde u_2\\-\tilde u_1-\tilde u_2}.
\]
The consensus protocol \eqref{eq:conprot} corresponds to the choice $K_\text e=-I$ in this case, as well as for any other choice of $\mathcal G$ and $\nu$.

Now consider the exogenous signals $d_u$ and $d_y$, which we refer to as disturbances.  On the physical level they represent inevitable effects of the outside world on the controlled plant (agents). These signals are supposed to be bounded and independent of the signals generated by the controlled system. We introduce disturbances to define the notion of the internal stability for the system in Fig.\,\ref{fig:BD_cl}, which is the focus point of this paper. Specifically, we say that this system is \emph{internally stable} if the $2\times2$ operator connecting exogenous signals $d_u$ and $d_y$ with internal signals $u$ and $y$, i.e.\
\begin{equation} \label{eq:T4}
 T_4:(d_y,d_u)\mapsto(y,u)
\end{equation}
is well defined and stable, see \cite[Sec.\,4]{GSm:93}.

The general question of interest in this paper is \emph{under what conditions on the agents $P_i$ are there causal edge controllers $K_{\text e,j}$ internally stabilizing the diffusively-coupled system in Fig.\,\ref{fig:BD_cl}?} Note that the existence of edge controllers rendering the closed-loop operator well defined is obvious, just take $K_{\text e,j}=0$ for all $j$. We shall thus focus on the stability of $T_4$.

Addressing the stability question in the most general, nonlinear and time-varying, case might be overly technical. We thus limit our attention to the class of LTI plants and edge controllers, whose transfer functions belong to the quotient field of $H_\infty$, see \cite[\S A.7.1]{CZw:20}, which is a sufficiently general class. We further assume that
\begin{assumptions}
\item\label{ass:Picf} there are right coprime $M_i,N_i\in H_\infty$ and left coprime $\tilde M_i,\tilde N_i\in H_\infty$ such that $P_i=N_iM_i^{-1}=\tilde M_i^{-1}\tilde N_i$ for all $i$,
\end{assumptions}
where coprimeness is understood as the existence of Bézout coefficients in $H_\infty$, see Appendix \ref{sec:coprf}. The representation of $P_i$ above is known as its coprime factorization. We hereafter refer to the transfer functions $M_i(s)$ and $\tilde M_i(s)$ as the right and left denominators of $P_i$, respectively, and the transfer functions $N_i(s)$ and $\tilde N_i(s)$ as its right and left numerator.  Assumption \Ass{\ref{ass:Picf}} is practically nonrestrictive. It holds for all finite-dimensional agents with proper transfer functions and is equivalent to the stabilizability of $P_i$ by feedback for agents with transfer functions from the quotient field of $H_\infty$ \cite{Sm:89}. Thus, if an agent fails to satisfy \Ass{\ref{ass:Picf}}, we cannot expect any MAS that includes it to be stabilizable by diffusive coupling.

\begin{remark} \label{rem:Fig2vs3}%
 We choose the application points of exogenous disturbances for the internal stability analysis to be at the points where the agents, $P$, are connected with the controller $K$ defined in \eqref{eq:K}. In this choice we follow the physical nature of the interconnection in Fig.\,\ref{fig:BD_cl} and think of separating the blocks $E\otimes I$ and $E^{\top\!}\otimes I$ in the controller as merely a way to streamline the choice of the design parameters, which are the edge controllers in $K_\text e$.
 \begin{figure}[t]
  \centering\includegraphics[scale=.667]{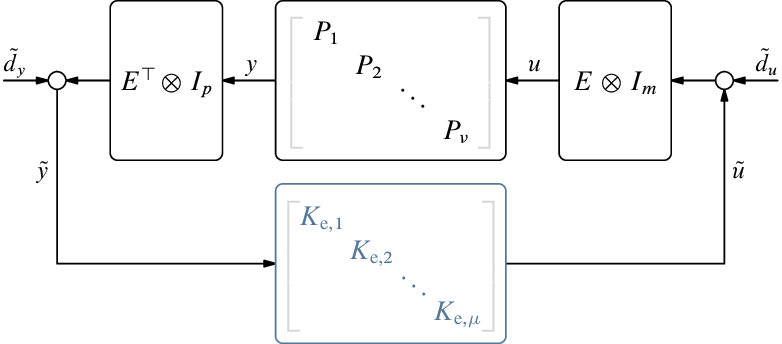}
  \caption{Diffusively-coupled feedback setup as edge stabilization} \label{fig:BD_es}
 \end{figure}
 An alternative viewpoint is presented in Fig.\,\ref{fig:BD_es}, where all \emph{fixed} parts are regarded as the controlled plant,
 \begin{equation} \label{eq:Pe}
  P_\text e\coloneq(E^{\top\!}\otimes I_p)P(E\otimes I_m),
 \end{equation}
 much inline with the generalized plant philosophy \cite[Sec.\,3.8]{SkP:05}, see e.g.\ \cite[Fig.\,6]{ZM:11} or \cite[E9.6]{B:22}. A natural definition of internal stability for it shall be based on the exogenous inputs $\tilde d_y$ and $\tilde d_u$, entering before and after the edge controller $K_\text e$. This would change the results, see Remark~\ref{rem:stcFig3} at the end of \S\ref{sec:calcc}.  Still, we believe that the configuration in Fig.\,\ref{fig:BD_cl} is the right way to address the internal stability of MASs. After all, it is the agents who interact with the environment.
\end{remark}


\section{The main result} \label{sec:stability}

The main technical result of this work, whose proof is postponed to \S\ref{sec:pfmr}, is formulated as follows.
\begin{theorem} \label{th:mainr}%
 No LTI $K_{\text e,j}$ can internally stabilize the diffusively-coupled system in Fig.\,\ref{fig:BD_cl} if there is $\lambda\in\bar{\mathbb C}_0$, common to all agents, such that
 \begin{subequations} \label{eq:thmCond}
 \begin{gather}
  \bigcap_{i=1}^\nu\ker\,[M_i(\lambda)]^\top\ne\{0\} \label{eq:thmCond:rcf} \\
 \shortintertext{or}
  \bigcap_{i=1}^\nu\ker\tilde M_i(\lambda)\ne\{0\}. \label{eq:thmCond:lcf}
 \end{gather}
 \end{subequations}
 where $M_i$ and $\tilde M_i$ are denominators in the coprime factorizations of $P_i$
 under \Ass{\ref{ass:Picf}}.
\end{theorem}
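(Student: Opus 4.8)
The plan is to reduce internal stability to the unimodularity over \Hinf\ of a single return-difference matrix built from coprime factorizations, and then to show that under \eqref{eq:thmCond} this matrix is structurally singular at the offending $\lambda\in\wbar{\mathbb C}_0$, no matter how the edge controllers are chosen. Fix blockwise coprime factorizations of the aggregate plant, $P=\diag\{P_i\}=NM^{-1}=\tilde M^{-1}\tilde N$ with $M=\diag\{M_i\}$ and $\tilde M=\diag\{\tilde M_i\}$, which exist by \Ass{\ref{ass:Picf}}. Arguing by contradiction, I would assume some LTI $K_{\text e,j}$ makes the loop internally stable. Then $K=(E\otimes I_m)K_{\text e}(E^\top\otimes I_p)$ stabilizes $P$, hence admits a right coprime factorization $K=UV^{-1}$, and the standard characterization (Appendix~\ref{sec:coprf}) gives that internal stability is equivalent to $\Delta\coloneq\tilde M V-\tilde N U$ being unimodular, i.e.\ $\Delta^{-1}\in\Hinf$ (signs following the appendix convention).

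The heart of the matter is a spatial deficiency of the diffusive structure that no edge controller can repair. Assume first that \eqref{eq:thmCond:lcf} holds and take $0\ne\eta\in\bigcap_i\ker\tilde M_i(\lambda)$. The \emph{consensus direction} $\ones_\nu\otimes\eta$ is annihilated by the controller's input map, since
\[
 (E^\top\otimes I_p)(\ones_\nu\otimes\eta)=(E^\top\ones_\nu)\otimes\eta=0
\]
because $\ones_\nu^\top E=0$ for every incidence matrix; consequently $K(\ones_\nu\otimes\eta)=0$ identically in $s$. At the same time the aggregate denominator kills the very same vector at $\lambda$, the $i$th block of $\tilde M(\lambda)(\ones_\nu\otimes\eta)$ being $\tilde M_i(\lambda)\eta=0$.

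These two facts collide inside $\Delta$. Setting $\psi\coloneq V^{-1}(\ones_\nu\otimes\eta)$, I get $U\psi=UV^{-1}(\ones_\nu\otimes\eta)=K(\ones_\nu\otimes\eta)=0$ and $V\psi=\ones_\nu\otimes\eta$, whence
\[
 \Delta\psi=\tilde M V\psi-\tilde N U\psi=\tilde M\,(\ones_\nu\otimes\eta).
\]
The right-hand side lies in \Hinf\ and vanishes at $\lambda$. If $\Delta$ were unimodular, then $\psi=\Delta^{-1}\tilde M(\ones_\nu\otimes\eta)\in\Hinf$ with $\psi(\lambda)=0$; but evaluating $V\psi=\ones_\nu\otimes\eta$ at $\lambda$ yields $V(\lambda)\psi(\lambda)=\ones_\nu\otimes\eta$, i.e.\ $0=\ones_\nu\otimes\eta$, contradicting $\eta\ne0$. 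Condition \eqref{eq:thmCond:rcf} would be treated by the mirror-image argument: with $0\ne\zeta$ such that $[M_i(\lambda)]^\top\zeta=0$ and a left coprime factorization $K=\tilde V^{-1}\tilde U$, the row vector is annihilated on the output side via $(\ones_\nu\otimes\zeta)^\top(E\otimes I_m)=0$, and the same computation applied to $\tilde V M-\tilde U N$ produces the contradiction.

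The step I expect to require the most care is the passage to a coprime factorization of the \emph{composite} controller $K$ and, relatedly, the possibility that $\lambda$ is itself a pole of $K$, so that $V(\lambda)$ is singular and $\psi$ cannot be evaluated by inverting $V(\lambda)$. This is exactly why I would phrase the final contradiction through the \Hinf\ identities $\Delta\psi=\tilde M(\ones_\nu\otimes\eta)$ and $V\psi=\ones_\nu\otimes\eta$ as equalities in the quotient field, evaluating at $\lambda$ only quantities already known to be bounded there, rather than inverting $V(\lambda)$ pointwise. The remaining points—the blockwise existence of the plant factorizations under \Ass{\ref{ass:Picf}} and the validity of the annihilation identities for an arbitrary incidence matrix—are routine and follow from $\ones^\top E=0$.
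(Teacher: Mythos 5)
Your proposal is correct, and it reaches the theorem by a genuinely different route than the paper's own proof in \S\ref{sec:pfmr}. Both arguments hinge on the same two structural facts---$\ones^\top E=0$ forces the composite controller to annihilate consensus directions, and \eqref{eq:thmCond} supplies a common kernel of the aggregate denominator at $\lambda$---but the machinery around them differs. The paper stays entirely within right coprime factorizations: it writes an explicit rcf of the full closed-loop map $T_4$ in \eqref{eq:gofrcf}, certifies its coprimeness by a Bézout identity, reduces internal stability to stable invertibility of the $2\times2$ block denominator via Lemma~\ref{lem:GstM}, converts that to the singular-value infimum test \eqref{eq:Minvstc} via the matrix corona theorem (Lemma~\ref{lem:GinvpGz}), and violates the test with the left annihilator \eqref{eq:cokerMcl}. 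You instead invoke the mixed (Vidyasagar-style) criterion pairing an lcf of the plant with an rcf of the controller---internal stability iff $\Delta=\tilde MV-\tilde NU$ is unimodular---and obtain the contradiction by producing the single \Hinf\ vector $\psi=\Delta^{-1}\tilde M(\ones\otimes\eta)$, which must vanish at $\lambda$ yet satisfy $V\psi=\ones\otimes\eta$; your dual treatment of \eqref{eq:thmCond:rcf} mirrors the paper's (which proves \eqref{eq:thmCond:rcf} in detail and dualizes \eqref{eq:thmCond:lcf}, the reverse of your choice). Your route is more elementary: it needs neither the corona theorem nor the explicit Bézout verification of the closed-loop factorization, and it sidesteps the possible singularity of $V(\lambda)$ exactly as you indicate, by evaluating only \Hinf\ quantities. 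What the paper's heavier construction buys is reuse and reach: the factorization \eqref{eq:gofrcf} and the annihilator \eqref{eq:cokerMcl} are exploited again in \S\ref{sec:calcc} to identify $\smat{0\\\ones\otimes v}$ as an input pole direction of $T_4$ and to pinpoint which closed-loop maps become unstable, and the infimum formulation extends directly to the generalization in \S\ref{sec:ext} to agents that are unstable without any pole in $\wbar{\mathbb C}_0$, where no single $\lambda$ with a common kernel exists (only a sequence $\lambda_j$ along which the relevant quantity tends to zero), so your pointwise-evaluation contradiction has no immediate analogue there. One small blemish: the criterion you attribute to ``the standard characterization (Appendix~\ref{sec:coprf})'' is not in that appendix, which contains only the Bézout definitions and Lemmas~\ref{lem:GstM} and~\ref{lem:GpMz}; you should either cite it from the literature (e.g.\ Vidyasagar's book, the paper's reference for coprime factorizations) or derive it in the two lines it takes from $I-PK=\tilde M^{-1}\Delta V^{-1}$ together with the Bézout identities \eqref{eq:BezE}. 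This is a citation issue, not a mathematical gap.
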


Theorem~\ref{th:mainr}, formulated in terms of coprime factors of agents, might appear somewhat abstract and technical. This is a consequence of considering a fairly general class of LTI agents under the mild assumption \Ass{\ref{ass:Picf}}. We show in the next section that if the class of admissible agents is limited to finite-dimensional ones, then more insightful statements can be provided. Nevertheless, the formulation in Theorem~\ref{th:mainr} becomes substantially more intuitive in some frequently studied special cases.

The first of them is the case of homogeneous agents, which is perhaps the best studied situation.
\begin{corollary} \label{cor:unifa}%
 If the agents are homogeneous, i.e.\:$P_i=P_0$ for all $i\in\mathbb N_\nu$, and $P_0(s)$ has at least one pole in\/ $\bar{\mathbb C}_0$, then no LTI $K_{\text e,j}$ can internally stabilize the system in Fig.\,\ref{fig:BD_cl}.
\end{corollary}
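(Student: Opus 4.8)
The plan is to derive Corollary~\ref{cor:unifa} directly from Theorem~\ref{th:mainr} by verifying that homogeneous agents with a closed-right-half-plane pole satisfy one of the kernel conditions \eqref{eq:thmCond}. The key observation is that homogeneity collapses the intersection over $i$ to a single condition: if $P_i=P_0$ for all $i$, then $M_i=M_0$ and $\tilde M_i=\tilde M_0$ share a common coprime factorization, so that $\bigcap_{i=1}^\nu\ker\tilde M_i(\lambda)=\ker\tilde M_0(\lambda)$ and likewise $\bigcap_{i=1}^\nu\ker[M_i(\lambda)]^\top=\ker[M_0(\lambda)]^\top$. Thus the intersection is automatically nontrivial at $\lambda$ whenever the single denominator drops rank there, and it remains only to exhibit such a $\lambda\in\wbar{\mathbb C}_0$.

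The heart of the argument is the standard link between poles of $P_0$ and rank deficiencies of its coprime denominators. I would invoke the coprime factorization machinery assumed under \Ass{\ref{ass:Picf}} (detailed in Appendix~\ref{sec:coprf}): since $P_0=\tilde M_0^{-1}\tilde N_0$ with $\tilde M_0,\tilde N_0$ left coprime in \Hinf, a point $\lambda\in\wbar{\mathbb C}_0$ is a pole of $P_0$ precisely when $\tilde M_0(\lambda)$ loses rank, i.e.\ $\ker\tilde M_0(\lambda)\ne\{0\}$. This is exactly condition \eqref{eq:thmCond:lcf} for the homogeneous case. (Symmetrically, one could argue through the right factorization and \eqref{eq:thmCond:rcf}, but either suffices.) Hence the hypothesis that $P_0$ has at least one pole in $\wbar{\mathbb C}_0$ produces the required $\lambda$, Theorem~\ref{th:mainr} applies, and no LTI edge controller can internally stabilize the system.

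The main obstacle I anticipate is justifying the pole/rank-deficiency equivalence cleanly over \Hinf\ rather than over polynomials. For finite-dimensional plants this is the familiar fact that $\lambda$ is a pole of $P_0$ iff the denominator matrix is singular there, but in the general setting of the quotient field of \Hinf\ one must argue via coprimeness: the Bézout identity guarantees that $\tilde M_0(s)$ and $\tilde N_0(s)$ cannot simultaneously drop column rank, so any singularity of $\tilde M_0(\lambda)$ at a closed-right-half-plane $\lambda$ genuinely reflects an unstable pole of $P_0$ and is not an artifact of a shared factor. I would lean on Appendix~\ref{sec:coprf} for the precise statement that closed-right-half-plane poles of $P_0$ correspond to the values of $\lambda\in\wbar{\mathbb C}_0$ at which the coprime denominator fails to be left-invertible, which is the only nontrivial ingredient; the reduction of the intersection to a single kernel is immediate from homogeneity.
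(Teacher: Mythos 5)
Your proposal is correct and follows essentially the same route as the paper: homogeneity collapses the intersection in \eqref{eq:thmCond} to a single kernel, and the pole of $P_0$ at $\lambda\in\wbar{\mathbb C}_0$ forces the denominator to be singular there (this is exactly Lemma~\ref{lem:GpMz} in \appendixname~\ref{sec:coprf}), so Theorem~\ref{th:mainr} applies. Note that you only need the one implication ``pole $\Rightarrow$ singular denominator,'' which follows immediately from boundedness of $\tilde M_0^{-1}\tilde N_0$ near $\lambda$ if $\tilde M_0(\lambda)$ were nonsingular; the converse (and the Bézout argument you sketch for it) is not required.
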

\begin{proof}
 By Lemma~\ref{lem:GpMz}, if $\lambda\in\bar{\mathbb C}_0$ is a pole of $P_0(s)$, then both $M_0(\lambda)$ and $\tilde M_0(\lambda)$ are singular, whence the result follows. \qed
\end{proof}
This result readily applies to the problem studied in \S\ref{sec:intro:mot}. The
agents in \eqref{eq:distcons} are homogeneous and $P_0(s)=1/s$ has an unstable pole at
the origin. Corollary~\ref{cor:unifa} then agrees with the conclusion of
\S\ref{sec:intro:mot} that the closed-loop system is not internally stable.

Another particular case for which the formulation is simplified is a MAS with SISO agents.
\begin{corollary} \label{cor:sisoa}%
 If the agents are SISO and all have a pole at the same\/ $\lambda\in\bar{\mathbb C}_0$, regardless of multiplicities, then no LTI $K_{\text e,j}$ can internally stabilize the diffusively-coupled system in Fig.\,\ref{fig:BD_cl}.
\end{corollary}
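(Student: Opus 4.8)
The plan is to reduce the claim to the already-established Theorem~\ref{th:mainr} by verifying its rank condition \eqref{eq:thmCond:rcf} at the shared pole. In the SISO setting every right denominator $M_i(s)$ is a scalar rational function, so $[M_i(\lambda)]^\top$ is just the complex number $M_i(\lambda)$, and $\ker[M_i(\lambda)]^\top$ is nontrivial---in fact equal to all of $\mathbb C$---exactly when $M_i(\lambda)=0$. The whole argument therefore hinges on showing that the common pole forces every scalar denominator to vanish at $\lambda$.

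First I would invoke Lemma~\ref{lem:GpMz}, which ties poles of an agent to the singularity of its coprime denominators: since $\lambda\in\wbar{\mathbb C}_0$ is a pole of each $P_i(s)$, the matrix $M_i(\lambda)$ is singular for every $i$. For a $1\times1$ matrix singularity simply means the scalar is zero, so $M_i(\lambda)=0$ for all $i$, and this holds regardless of the pole's multiplicity since Lemma~\ref{lem:GpMz} only requires $\lambda$ to be a pole, not a simple one. Consequently $\ker[M_i(\lambda)]^\top=\mathbb C$ for each $i$, and the intersection $\bigcap_{i=1}^\nu\ker[M_i(\lambda)]^\top=\mathbb C\ne\{0\}$, so condition \eqref{eq:thmCond:rcf} is met at this common $\lambda$. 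Theorem~\ref{th:mainr} then immediately precludes internal stabilization by any LTI edge controller, which is the assertion.

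I expect no genuine obstacle here. The only place where the general MIMO result could fail to apply is the \emph{directional} alignment of the individual kernels, and this is automatic for scalar agents because each kernel is the full one-dimensional space $\mathbb C$. In other words, a SISO agent offers only a single unstable direction, so sharing the pole $\lambda$ is by itself enough to satisfy \eqref{eq:thmCond:rcf}; the subtler direction-matching that Theorem~\ref{th:mainr} must track in the MIMO case becomes vacuous. This also mirrors the proof of Corollary~\ref{cor:unifa}, with the homogeneity hypothesis there replaced by the weaker requirement of a single common pole.
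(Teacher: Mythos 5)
Your proof is correct and follows essentially the same route as the paper: invoke Lemma~\ref{lem:GpMz} to get that the scalar denominators vanish at the common pole $\lambda$, so the kernel intersection in \eqref{eq:thmCond:rcf} is all of $\mathbb C$, and Theorem~\ref{th:mainr} applies. The paper's one-line proof notes $M_i(\lambda)=\tilde M_i(\lambda)=0$ (so both conditions of \eqref{eq:thmCond} hold), whereas you verify only \eqref{eq:thmCond:rcf}, but since the theorem requires only one of the two, this difference is immaterial.
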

\begin{proof}
 By Lemma~\ref{lem:GpMz}, in this case $M_i(\lambda)=\tilde M_i(\lambda)=0$ for all $i\in\mathbb N_\nu$, whence the result follows. \qed
\end{proof}

A consequence of Corollary~\ref{cor:sisoa} is that the consensus protocol, as well as any other diffusively-coupled control laws, cannot internally stabilize a group of SISO agents if all of them contain an integral action. This result is reminiscent of that by \cite{WSA:11} that states that a common internal model is a necessary condition for a diffusively-coupled system to synchronize their state trajectories. It highlights a contradiction or trade-off of sorts, where on the one hand, a common pole at the origin among agents is required for synchronization, and on the other hand, this common (unstable) pole is precisely the cause for lack of internal stability.

\subsection{Proof of Theorem~\ref{th:mainr}} \label{sec:pfmr}

We are now prepared to prove Theorem~\ref{th:mainr}. Only the statement about the right coprime factor, i.e.\ \eqref{eq:thmCond:rcf}, is proved. The proof of \eqref{eq:thmCond:lcf} follows by dual arguments.

The proof requires a technical result of \citet{F:68tams}, known as the matrix corona theorem, see also the proof of \cite[Prop.\,11]{GSm:93} for a closer formulation.
\begin{lemma} \label{lem:GinvpGz}%
 If $G\in H_\infty^{n\times n}$, then
 \[
  G^{-1}\in H_\infty\iff\textstyle\inf_{s\in\bar{\mathbb C}_0}\sigmin{G(s)}>0.
 \]
\end{lemma}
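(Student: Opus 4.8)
The plan is to prove the two implications separately, with the forward direction being routine and the converse reducing the matrix statement to a scalar one. For a square $G(s)$ that is invertible at a point, $\sigmin{G(s)}$ equals the reciprocal of the spectral norm of $G(s)^{-1}$, since the singular values of the inverse are the reciprocals of those of $G(s)$. Thus, assuming $G^{-1}\in\Hinf$ with $\|G^{-1}\|_\infty=c<\infty$, for every $s\in\mathbb C_0$ one has $\sigmin{G(s)}=1/\|G(s)^{-1}\|\ge1/c$, and passing to nontangential boundary values on $j\mathbb R$ yields the same bound a.e.; hence $\inf_{s\in\wbar{\mathbb C}_0}\sigmin{G(s)}\ge1/c>0$, settling ``$\Rightarrow$''.

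For ``$\Leftarrow$'', set $\delta\coloneq\inf_{s\in\wbar{\mathbb C}_0}\sigmin{G(s)}>0$ and $R\coloneq\|G\|_\infty<\infty$. First I would note that on all of $\mathbb C_0$ every singular value of $G(s)$ lies in $[\delta,R]$, so their product obeys $\delta^n\le\abs{\det G(s)}\le R^n$. In particular the scalar function $\det G\in\Hinf$ is bounded away from zero, hence has no zeros in $\mathbb C_0$, so $1/\det G$ is holomorphic and bounded there, i.e.\ $1/\det G\in\Hinf$. Since the entries of the adjugate $\mathrm{adj}(G)$ are polynomials in the entries of $G$, we have $\mathrm{adj}(G)\in\Hinf^{n\times n}$, and therefore
\[
 G^{-1}=\frac{1}{\det G}\,\mathrm{adj}(G)\in\Hinf^{n\times n},
\]
which is the desired conclusion.

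The only genuinely delicate point is the bookkeeping for the infimum over the \emph{closed} half-plane $\wbar{\mathbb C}_0$: the values of $G$ on $j\mathbb R$ must be read as boundary limits, and I would justify the passage between the interior bound on $\mathbb C_0$ and the boundary bound through the maximum-modulus principle, which for $\Hinf$ identifies the supremum over $\mathbb C_0$ with the essential supremum on $j\mathbb R$ (applied here to $G^{-1}$ and to $1/\det G$). I expect this boundary accounting to be the main obstacle, rather than any deep function theory: because $G$ is square, the argument never invokes a common zero of several functions, so the full strength of the matrix corona theorem of \citet{F:68tams} is not needed here—the elementary scalar fact that a nonvanishing, uniformly-bounded-below $\Hinf$ function has its reciprocal in $\Hinf$ suffices. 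The corona machinery becomes indispensable only for the non-square (tall) coprime factorizations used later, where $\sigmin$ instead encodes the absence of a common zero of the stacked factors.
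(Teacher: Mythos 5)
Your proof is correct, and it takes a genuinely different route from the paper: the paper does not prove Lemma~\ref{lem:GinvpGz} at all, but imports it as the matrix corona theorem of \citet{F:68tams} (pointing to the proof of \cite[Prop.\,11]{GSm:93} for a close formulation), whereas you give a self-contained elementary argument. The feature you exploit is squareness: the bound $\sigmin{G(s)}\ge\delta$ forces $\abs{\det G(s)}\ge\delta^n$, and then $G^{-1}=(\det G)^{-1}\mathrm{adj}(G)$ lies in \Hinf\ because inverting a \emph{single} scalar \Hinf\ function bounded away from zero is trivial, while $\mathrm{adj}(G)$ inherits membership in \Hinf\ entrywise; the corona theorem is genuinely needed only when the lower bound is spread across several functions with no common zero, i.e.\ for tall matrices and one-sided inverses. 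Since the paper invokes the lemma only for the square block matrix in \eqref{eq:Minvstc}, your argument suffices for everything done there, and making the deep function-theoretic citation avoidable is precisely what your approach buys; the citation, in turn, buys generality that this paper never uses. Two caveats. First, your closing remark slightly misplaces where corona machinery would become indispensable: the paper never applies the lemma to a non-square matrix; the tall-matrix corona theorem is rather what characterizes coprimeness itself (existence of Bézout coefficients is equivalent to $\inf_s\sigmin{\smat{M(s)\\N(s)}}>0$), a characterization the paper sidesteps by \emph{assuming} the Bézout identities in \Ass{\ref{ass:Picf}}. Second, your boundary bookkeeping is the standard resolution and is adequate: with \Hinf\ defined on the open half-plane, values on $j\mathbb R$ are a.e.\ nontangential limits and inherit the interior bound, which is all the forward direction needs, while your converse direction uses only the infimum over the open half-plane, so no essential difficulty hides there.
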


It is readily seen that $M_P\coloneq\diag\{M_i\}$ and $N_P\coloneq\diag\{N_i\}$ are right coprime factors of $P=\diag\{P_i\}$. Because any internally stabilizing $K$ in \eqref{eq:K} is in effect stabilized by the plant, we only need to consider edge controllers for which $K$ admits coprime factorizations over $H_\infty$. So let $K=N_KM_K^{-1}$ for right coprime $M_K,N_K\in H_\infty$. By \eqref{eq:K},
\[
 N_K(s)=(E\otimes I_m)K_\text e(s)(E^{\top\!}\otimes I_p)M_K(s).
\]
Because $\mathbb1^\top E=0$, we have that $(\mathbb1^\top\otimes I_m)(E\otimes I_m)=0$ as well and, hence, $(\mathbb1^\top\otimes I_m)N_K(s)=0$ for all $s$ at which $K_\text e(s)$ is finite. But $K_\text e(s)$ is in the quotient field of $H_\infty$, meaning that the denominators of its entries are holomorphic in $\mathbb C_0$ and, by \cite[Thm.\,10.18]{R:87}, may have at most countable number of isolated zeros. As such, we can always find a region in $\mathbb C_0$ in which $(\mathbb1^\top\otimes I_m)N_K(s)=0$. But the latter implies that
\[
 (\mathbb1^{\top\!}\otimes I_m)N_K=0,
\]
by the same \cite[Thm.\,10.18]{R:87}.

Now, return to the system in Fig.\,\ref{fig:BD_cl}. It is readily verified that the closed-loop system $T_4$ in \eqref{eq:T4} reads
\begin{equation} \label{eq:gof}
 T_4=\mmatrix{I\\K}(I-PK)^{-1}\mmatrix{I&P}\eqcolon\mmatrix{S&T_\text d\\T_\text c&T},
\end{equation}
where the blocks of $T_4$ are the four fundamental closed-loop transfer functions. Straightforward algebra yields that
\begin{equation} \label{eq:gofrcf}
 T_4=\mmatrix{M_K&0\\N_K&0}\smash{\mmatrix{M_K&-N_P\\-N_K&M_P}^{-1}}.
\end{equation}
This is a right coprime factorization of $T_4$, as attested by the Bézout equality (cf.\ \eqref{eq:BezE:r})
\begin{multline*}
 \mmatrix{\tilde M_P&\tilde N_P\\-Y_P&X_P}\mmatrix{M_K&-N_P\\-N_K&M_P}\\+\mmatrix{X_K-\tilde M_P&Y_K+\tilde N_P\\Y_P&X_P}\mmatrix{M_K&0\\N_K&0}=I,
\end{multline*}
where $\tilde M_P\coloneq\diag\{\tilde M_i\}$ and
$\tilde N_P\coloneq\diag\{\tilde N_i\}$. By Lemma~\ref{lem:GstM}, $T_4$ is stable if and only if
\begin{equation} \label{eq:Mcl}
 \mmatrix{M_K&-N_P\\-N_K&M_P}^{-1\!}\in H_\infty,
\end{equation}
or
\begin{equation} \label{eq:Minvstc}
 \inf_{s\in\bar{\mathbb C}_0}\sigmin{\mmatrix{M_K(s)&-N_P(s)\\-N_K(s)&M_P(s)}}>0
\end{equation}
by Lemma~\ref{lem:GinvpGz}. But \eqref{eq:thmCond:rcf} implies that there is $v\ne0$ such that $v^\top M_i(\lambda)=0$ for all $i$ or, equivalently, $(\mathbb1\otimes v)^\top M_P(\lambda)=0$. Taking into account that $(\mathbb1\otimes v)^\top N_K=v^\top(\mathbb1\otimes I_m)^\top N_K=0$, we end up with
\begin{equation} \label{eq:cokerMcl}
 \mmatrix{0&(\mathbb1\otimes v)^\top}\mmatrix{M_K(\lambda)&-N_P(\lambda)\\-N_K(\lambda)&M_P(\lambda)}=0,
\end{equation}
which violates \eqref{eq:Minvstc}. We thus have that if \eqref{eq:thmCond:rcf} holds, then there is no $K_\text e$ that internally stabilizes the system in Fig.~\ref{fig:BD_cl}. \qed

\subsection{Generalizations} \label{sec:ext}

Some possible generalizations of the result of Theorem~\ref{th:mainr} are outlined below.

\subsubsection{Asymmetric coupling}

Some MAS problems consider a \emph{directed} interaction graph, making the notion of neighboring agents asymmetric. Controllers under such constrains are no longer diffusive in the sense discussed in Section~\ref{sec:setup}. Still, a variant of Theorem~\ref{th:mainr} may apply.

For example, let an edge going from node $i$ to node $j$ indicate that the $i$th agent has access to $y_i-y_j$. The existence of the edge $(i,j)$ does not imply that there is also the edge $(j,i)$. It is evident that the controller outlined in Fig.\,\ref{fig:BD_cl} and \eqref{eq:K} can no longer provide an appropriate distributed controller since, as discussed in Section~\ref{sec:setup}, it sums up all the edge correction terms connected to each corresponding node. Nevertheless, several notable MAS control architectures over directed graphs still admit a decomposition similar to that of \eqref{eq:K}. Consider again the classic consensus protocol. It can be adapted to accommodate directed graphs by replacing the symmetric Laplacian, $L=EE^\top$, with a directed counterpart such as the out-degree Laplacian $L_\text{out}$ \cite[Sec.\,7.3]{B:22}. By defining an auxiliary matrix,
\[
 [B_\text{out}]_{ij}=
  \begin{cases}
   1 & \text{if vertex $i$ is the head of edge $j$} \\
   0 & \text{otherwise}
  \end{cases},
\]
the directed out-degree Laplacian can be represented by the product $L_\text{out}=B_\text{out}E^\top$. This suggests that a controller of the form
\begin{equation} \label{eq:Kout}
 K_\text{out}\coloneq(B_\text{out}\otimes I_m)K_\text e(E^{\top\!}\otimes I_p),
\end{equation}
can be used to represent various control laws over directed graphs. For example setting $K_\text e=-I$ results in the aforementioned directed consensus protocol, while picking $K_\text e=I_\nu\otimes\tilde K$ for some gain $\tilde K$ yields the synchronizing controllers discussed in \cite[Sec.\,8.4]{B:22}.

The controller structure in \eqref{eq:Kout} mirrors that in \eqref{eq:K}. If \eqref{eq:thmCond:lcf} holds, then the proof of Theorem~\ref{th:mainr} applies verbatim to any MAS controlled by it. However, this is not the case for \eqref{eq:thmCond:rcf}, implying that some systems may be stabilizable only if the graph is directed, as illustrated in the following example.
\begin{example} \label{ex:Dir3}%
 Consider a system of $\nu=3$ first-order agents
 \[
  P_1(s)=\mmatrix{1&0\\1/s&1}
  \quad\text{and}\quad
  P_2(s)=P_3(s)=\mmatrix{1/s&0\\1&1}.
 \]
 Assume that their connectivity is represented by the directed cycle graph, which has three directed edges $(1,2)$, $(2,3)$, and $(3,1)$. This system can be described by \eqref{eq:Kout} with
 \[
  E=\mmatrix{1&0&-1\\-1&1&0\\0&-1&1},\quad B_\text{out}=I_3,
 \]
 and arbitrary block-diagonal edge controllers. It is then a matter of standard algebra to verify that these plants admit denominators
 \[
  \tilde M_1(s)=\mmatrix{1&0\\0&s/(s+1)}
 \]
 and
 \[
  \tilde M_2(s)=\tilde M_3(s)=\mmatrix{s/(s+1)&0\\0&1}=M_i(s),\quad\forall i\in\mathbb N_3.
 \]
 Hence, condition \eqref{eq:thmCond:rcf} holds for $\lambda=0$, whereas condition \eqref{eq:thmCond:lcf} holds for no $\lambda$. Thus, if the interconnection graph was undirected, then Theorem~\ref{th:mainr} would rule out the existence of internally stabilizing edge controllers. But in the directed case in form \eqref{eq:Kout} with the identity $B_\text{out}$ what matters is only \eqref{eq:thmCond:lcf}. Hence, we cannot rule out the existence of an internally stabilizing controller. And indeed, it can be verified that
 \[
  K_\text e(s)=-\diag\left\{\mmatrix{1&1\\5/3&1},I_2,I_2\right\}
 \]
 results in an internally stable interconnection, with the closed-loop poles in $\{-1/2,-3/4,-1\}$.
\end{example}
Of course, following a similar procedure we may define the analogous $K_\text{in}$ (corresponding for example to the in-degree directed consensus protocol) and consider only condition \eqref{eq:thmCond:rcf}, then again the proof holds unchanged.

\begin{remark}
 The stabilizability of control architectures over directed graphs may nevertheless still require checking both conditions of Theorem~\ref{th:mainr}. This thesis is based on an interpretation of the edge controller \eqref{eq:Kout} as (dynamic) edge weights of the directed graph. A directed graph is called \emph{weight balanced} if the accumulated weights of incoming and outgoing edges are equal for each node. It is known \cite[Thm.\,3.17]{ME:10} that the consensus protocol for integrator agents can reach an average agreement, i.e.\ $x_i(t)\to(1/\nu)\mathbb1'x(0)$ for all $i$, iff the underlying digraph is weight balanced and weakly connected. A key property to prove this result is that the Laplacian of a weight-balanced digraph, $L_\text{out}$, satisfies $\ker L_\text{out}=\ker L_\text{out}^\top=\im\mathbb1$. Viewed within the context of Theorem~\ref{th:mainr}, this implies that if edge controllers in \eqref{eq:Kout} are chosen such that digraph is weight balanced, then both conditions of \eqref{eq:thmCond} must be checked anyway.
\end{remark}

\subsubsection{Arbitrary symmetric coupling}

The result of Theorem~\ref{th:mainr} still holds if the incidence matrix is replaced with a different coupling matrix, say $F\in\mathbb R^{\mu\times\nu}$, as long as there is a vector $0\ne v\in\mathbb R^\mu$ such that $v^\top F=0$. Such generalizations of a MAS were recently discussed in \cite{BChZ:21}, but are also included in works considering, for example, distributed function calculation in MAS \cite{SH:08}.

\subsubsection{Unstable systems with no poles in $\bar{\mathbb C}_0$}

It might happen that $P_i\not\in H_\infty$ not because of poles, or other singularities, in $\bar{\mathbb C}_0$.  For example, $P_i(s)=s/(s+1+s\mathrm e^{-s})$ has no singularities in $\bar{\mathbb C}_0$, but nonetheless does not belong to $H_\infty$, see \cite{PB:04}. The proof still applies in this case, and all we need is to replace \eqref{eq:thmCond} with the assumption that there is a sequence $\{\lambda_j\}$ in $\mathbb C_0$ such that $\inf_{\{\lambda_j\}}v^\top M_i(\lambda_j)=0$, or its dual version, holds for all $i\in\mathbb N_\nu$ and some $v\ne0$.

\subsubsection{Time-varying $K$}

The main result also extends to the case of time-varying controllers. This is particularly relevant for varying interconnection topologies, i.e.\ those where $E_{\mathcal G(t)}=E(t)$ is the incidence matrix of the time-varying graph $\mathcal G(t)$. Still, the condition $\mathbb1^\top E(t)$ holds for any topology, rendering the denominator in \eqref{eq:gofrcf} not stably invertible. We can then use \cite[Theorem~(i)]{V:88} to show that under no choice of $K_\text e$ the system is stabilizable, at least in the finite-dimensional case, whenever either one of the conditions in \eqref{eq:thmCond} holds.

\section{Finite-dimensional agents} \label{sec:fd}

If the agents $P_i$ are finite dimensional, the result of the previous section can be reformulated in a more insightful way. This is due to the ultimate connection between stability and pole locations, as well as clear definitions of cancellations in this case. So we proceed with assuming that all transfer functions $P_i(s)$ are real rational and proper (\Ass{\ref{ass:Picf}} always holds then).

Let $\pdiri G\lambda$ and $\pdiro G\lambda$ denote input and output direction of a pole $\lambda$ in $G(s)$, see \appendixname\ref{sec:pzdir} for details and other related definitions. The result below reformulates the conditions of Theorem~\ref{th:mainr} via pole directions of agents.
\begin{proposition} \label{pr:fdmain}%
 If $P_i(s)$ are real rational and proper, then \eqref{eq:thmCond:rcf} and \eqref{eq:thmCond:lcf} are equivalent to the existence of\/ $\lambda\in\bar{\mathbb C}_0$ such that
 \begin{subequations} \label{eq:thmCondir}
 \begin{gather}
  \bigcap_{i=1}^\nu\,\pdiri{P_i}\lambda\ne\{0\} \label{eq:thmCondir:i} \\
 \shortintertext{and}
  \bigcap_{i=1}^\nu\,\pdiro{P_i}\lambda\ne\{0\}, \label{eq:thmCondir:o}
 \end{gather}
 \end{subequations}
 respectively.
\end{proposition}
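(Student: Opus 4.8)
The plan is to establish the two equivalences by reducing each to a single, purely local identity at $\lambda$ between a null space of a coprime denominator and a pole-direction subspace. Specifically, I would prove that for every real-rational proper $P_i$ and every $\lambda\in\wbar{\mathbb C}_0$,
\[
 \ker\,[M_i(\lambda)]^\top=\pdiri{P_i}\lambda
 \quad\text{and}\quad
 \ker\tilde M_i(\lambda)=\pdiro{P_i}\lambda,
\]
both sides being $\{0\}$ when $\lambda$ is not a pole of $P_i$. Once these hold, condition \eqref{eq:thmCond:rcf} is verbatim $\bigcap_i\pdiri{P_i}\lambda\ne\{0\}$ and \eqref{eq:thmCond:lcf} is verbatim $\bigcap_i\pdiro{P_i}\lambda\ne\{0\}$, so intersecting the per-agent identities over $i$ and quantifying over $\lambda$ yields the claimed equivalences. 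Only the right-factor\,/\,input identity requires work; the output identity is its mirror image, obtained by running the same local analysis on the left factorization $P_i=\tilde M_i^{-1}\tilde N_i$ and reading off column (output) directions in place of row (input) ones.

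For the input identity I would argue locally. By Lemma~\ref{lem:GpMz}, $\lambda$ is a pole of $P_i$ iff $M_i(\lambda)$ is singular; if it is nonsingular then $\ker[M_i(\lambda)]^\top=\{0\}=\pdiri{P_i}\lambda$ and there is nothing to prove. So assume $M_i(\lambda)$ is singular. Since $N_i\in\Hinf$ and $\lambda\in\wbar{\mathbb C}_0$, the numerator $N_i$ is holomorphic at $\lambda$, hence the entire principal part of $P_i=N_iM_i^{-1}$ at $\lambda$ is inherited from $M_i^{-1}$. Expanding $M_i^{-1}$ about $\lambda$, its Laurent coefficients are spanned on the right by the right null vectors of $M_i(\lambda)$ and on the left by the left null vectors; left-multiplication by the constant $N_i(\lambda)$ can only shrink row spaces, never enlarge them. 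Reading off the row space of the principal part of $P_i$, which by definition (Appendix~\ref{sec:pzdir}) carries the input pole directions, then identifies them with $\ker[M_i(\lambda)]^\top$, \emph{provided} $N_i(\lambda)$ does not annihilate the directions carrying the pole.

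This proviso is exactly where right coprimeness is used and is the heart of the matter: if a left null vector $w$ of $M_i(\lambda)$ gave a vanishing residue, the associated right null vector $v$ (with $M_i(\lambda)v=0$) would also satisfy $N_i(\lambda)v=0$, whereas a Bézout identity $X_iM_i+Y_iN_i=I$ evaluated at $\lambda$ forces $v=0$, a contradiction. Thus no pole direction is cancelled and the identity follows. I expect the genuine obstacle to be the bookkeeping for non-simple poles: when $M_i(\lambda)$ has corank larger than one, or $M_i^{-1}$ a higher-order pole, one must show that the \emph{whole} subspace $\ker[M_i(\lambda)]^\top$ — not just one direction — is matched by the span of the input directions collected from all Laurent coefficients of $P_i$ at $\lambda$. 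The cleanest way to control this is the local Smith--McMillan factorization $M_i=U\,\Sigma\,V$ with $U,V$ holomorphic and invertible at $\lambda$ and $\Sigma=\diag\{(s-\lambda)^{k_\ell}\}$: from it one simultaneously reads the left null space of $M_i(\lambda)$, the row structure of the principal part of $M_i^{-1}$, and the input pole-direction subspace of $P_i$, and checks that all three coincide. The dual computation on $\tilde M_i$ delivers \eqref{eq:thmCondir:o} and completes the argument.
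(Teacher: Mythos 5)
Your high-level reduction is the same as the paper's: everything hinges on the per-agent identity $\ker\,[M_i(\lambda)]^\top=\pdiri{P_i}{\lambda}$ (and its dual), after which intersecting over $i$ and quantifying over $\lambda$ is immediate. The paper obtains this identity in two strokes: since $M_i\in\Hinf$ has no pole at $\lambda\in\wbar{\mathbb C}_0$, Lemma~\ref{lem:zerker} gives $\ker\,[M_i(\lambda)]^\top=\zdiro{M_i}{\lambda}$, and Lemma~\ref{lem:GpMzDir} gives $\zdiro{M_i}{\lambda}=\pdiri{P_i}{\lambda}$. You instead try to prove it from scratch through the Laurent expansion of $P_i=N_iM_i^{-1}$ at $\lambda$, and that is where your argument breaks.

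The gap is that you identify $\pdiri{P_i}{\lambda}$ with ``the row space of the principal part of $P_i$ at $\lambda$'' (the span of input directions collected from all Laurent coefficients). Under the paper's definition \eqref{eq:pdir:i}, $\pdiri{G}{\lambda}=B^\top\ker(\lambda I-A)^\top$ involves only genuine left eigenvectors, and for poles with nontrivial Jordan structure the two subspaces differ. Concretely, take $G(s)=\smat{1/s^2 & 1/s}$ with minimal realization $A=\smat{0&1\\0&0}$, $B=\smat{0&1\\1&0}$, $C=\smat{1&0}$: then $\pdiri{G}{0}=B^\top\ker A^\top=\im\smat{1\\0}$, and the state-feedback coprime factorization (poles placed at $-2$) gives $M(s)=\smat{s/(s+2)&0\\-2/(s+2)^2&s/(s+2)}$ with $\ker\,[M(0)]^\top=\im\smat{1\\0}$ as well---consistent with the proposition---yet the principal part of $G$ is $s^{-2}\smat{1&0}+s^{-1}\smat{0&1}$, whose row space is all of $\mathbb C^2$. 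So the three subspaces you propose to ``check coincide'' via the local Smith--McMillan form do not coincide: the Laurent row space is strictly larger. (Your subsidiary claim that every Laurent coefficient of $M_i^{-1}$ has rows spanned by left null vectors of $M_i(\lambda)$ is also false: here $M^{-1}(s)=\smat{(s+2)/s&0\\2/s^2&(s+2)/s}$ and its $s^{-1}$ coefficient is $2I$; only the leading coefficient behaves that way.) The error is not harmless bookkeeping. Pair the agent above with $P_2(s)=\smat{0&1/s}$, for which $\pdiri{P_2}{0}=\ker\,[M_2(0)]^\top=\im\smat{0\\1}$: condition \eqref{eq:thmCond:rcf} fails, since $\im\smat{1\\0}\cap\im\smat{0\\1}=\{0\}$, while your row-space criterion is satisfied, since $\mathbb C^2\cap\im\smat{0\\1}\ne\{0\}$; your route would therefore establish equivalence with a strictly weaker condition, i.e.\ a false statement. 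The Bézout step you call the heart of the matter is fine and standard; what is missing is the correct bridge from $\ker\,[M_i(\lambda)]^\top$ to the state-space pole directions, which is precisely the content of Lemma~\ref{lem:GpMzDir} (or, if you want a self-contained proof, a direct PBH-type computation on the factorization $M_i(s)=I+F_i(sI-A_i-B_iF_i)^{-1}B_i$).
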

\begin{proof}
 Because $\lambda\in\bar{\mathbb C}_0$ is not a pole of $M_i(s)$, Lemma~\ref{lem:zerker} applies and \eqref{eq:thmCond:rcf} reads $\cap_{i=1}^\nu\,\zdiro{M_i}\lambda\ne\{0\}$. Then \eqref{eq:thmCondir:i} follows by Lemma~\ref{lem:GpMzDir}. The proof for \eqref{eq:thmCondir:o} is similar. \qed
\end{proof}

In other words, for the system in Fig.\,\ref{fig:BD_cl} to not be stabilizable, the agents should not only have a common unstable pole, but also a common nontrivial direction of such a pole. Directions are obviously matched in the homogeneous and SISO cases addressed in Corollaries~\ref{cor:unifa} and \ref{cor:sisoa}, respectively. But the MIMO heterogeneous case may be less trivial.
\begin{example}
 Consider a system with $\nu=2$ first-order agents
 \[
  P_1(s)=\mmatrix{1/s&0\\0&1}
  \quad\text{and}\quad
  P_2(s)=\mmatrix{1\\\alpha}\frac1s\mmatrix{1&\beta}.
 \]
 Directions of their pole at the origin are
 \begin{gather*}
  \pdiri{P_1}0=\pdiro{P_1}0=\im\mmatrix{1\\0}, \\
  \pdiri{P_2}0=\im\mmatrix{1\\\beta},\quad\text{and}\quad\pdiro{P_2}0=\im\mmatrix{1\\\alpha}.
 \end{gather*}
 There are nontrivial intersections between input and output directions of the agents if and only if $\beta=0$ and $\alpha=0$, respectively. The incidence matrix is $E=\smat{1\\-1}$ in this case. Choose the edge controller (there is only one edge in this example) as
 \[
  K_\text e(s)=\mmatrix{(\alpha-\beta)\beta&-\alpha\\\beta&0}.
 \]
 The closed-loop characteristic polynomial, understood as the lowest common denominator of elements of $T_4(s)$ in \eqref{eq:gof}, is then $(s+\alpha^2)(s+\beta^2)$. Thus, the closed-loop system is stable unless $\alpha=0$ or $\beta=0$, which agrees with \eqref{eq:thmCondir}.
\end{example}

Also worth emphasizing is that conditions \eqref{eq:thmCondir:i} and \eqref{eq:thmCondir:o} might not be equivalent for MIMO agents, as illustrated by the example below.
\begin{example} \label{ex:P12neP21}%
 Return to the system studied in Example~\ref{ex:Dir3}. Directions associated with the (unstable) pole at the origin are
 \begin{gather*}
  \pdiri{P_i}0=\im\mmatrix{1\\0},\quad\forall i\in\mathbb N_3 \\
 \shortintertext{but}
  \pdiro{P_1}0=\im\mmatrix{0\\1}\ne\im\mmatrix{1\\0}=\pdiro{P_2}0.
 \end{gather*}
 Thus, in this case \eqref{eq:thmCondir:i} holds, whereas \eqref{eq:thmCondir:o} does not. This agrees with what we saw in Example~\ref{ex:Dir3} with respect to conditions \eqref{eq:thmCond}.
\end{example}

Another outcome of the finite dimensionality is that the formulation of Corollary~\ref{cor:unifa} can be strengthened to an ``if and only if'' statement.
\begin{corollary} \label{cor:unifda}%
 If the agents are homogeneous, i.e.\:$P_i=P_0$ for all $i\in\mathbb N_\nu$, and $P_0(s)$ is real rational and proper, then an LTI $K_{\text e,j}$ can internally stabilize the diffusively-coupled system in Fig.\,\ref{fig:BD_cl} if and only if $P_0$ is stable.
\end{corollary}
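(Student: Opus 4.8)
The plan is to establish the two implications separately, with the key observation being that for a real-rational proper $P_0$ the notions coincide: $P_0$ is stable (i.e.\ $P_0\in\Hinf$) if and only if it has no pole in $\wbar{\mathbb C}_0$. This equivalence is exactly the finite-dimensional feature that upgrades the one-directional Corollary~\ref{cor:unifa} into an ``if and only if'' statement, so most of the work is already available to us.

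For the necessity direction (internal stabilizability forces $P_0$ to be stable) I would argue by contraposition. Suppose $P_0$ is \emph{not} stable; since it is real rational and proper, this means it has at least one pole $\lambda\in\wbar{\mathbb C}_0$. The homogeneous family $P_i=P_0$ then falls squarely under the hypotheses of Corollary~\ref{cor:unifa}, which asserts that no LTI edge controller can internally stabilize the system in Fig.\,\ref{fig:BD_cl}. Hence internal stabilizability implies the absence of poles of $P_0$ in $\wbar{\mathbb C}_0$, i.e.\ $P_0\in\Hinf$.

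For the sufficiency direction I would simply exhibit a stabilizing edge controller, and the trivial one does the job: take $K_{\text e,j}=0$ for every edge, so that $K=0$ by \eqref{eq:K}. Substituting $K=0$ into \eqref{eq:gof} collapses the loop to $T_4=\mmatrix{I&P\\0&0}$, whose only nonzero block besides the identity is $P=\diag\{P_0\}$. If $P_0$ is stable then $P\in\Hinf$, so all four fundamental transfer functions lie in \Hinf\ and the system is internally stable. Intuitively, when the agents are open-loop stable the fully decoupled (zero-coupling) configuration is already internally stable, and no active diffusive coupling is needed.

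I do not expect a genuine obstacle in this proof; the content is essentially bookkeeping on top of Corollary~\ref{cor:unifa}. The only points deserving a line of care are (i) confirming that the degenerate $K_\text e=0$ is an admissible LTI edge controller yielding a well-posed loop, which is immediate since $I-PK=I$ is invertible, and (ii) making explicit that the step ``no pole in $\wbar{\mathbb C}_0$ $\Longleftrightarrow$ stable'' is precisely where real-rationality and properness of $P_0$ are invoked, as this is what the strengthening to an equivalence rests upon.
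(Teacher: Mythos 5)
Your proof is correct and follows essentially the same route as the paper's: the unstable case is dispatched by contraposition via Corollary~\ref{cor:unifa} (using real-rationality and properness to equate instability with a pole in $\wbar{\mathbb C}_0$), and the stable case by taking $K_\text e=0$, for which $T_4=\smat{I&P\\0&0}\in\Hinf$. Your explicit computation of $T_4$ and the remarks on well-posedness are just a more detailed write-up of the paper's two-line argument.
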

\begin{proof}
 If $P_0$ is unstable, then it has a pole in $\bar{\mathbb C}_0$ and Corollary~\ref{cor:unifa} applies. If $P_0$ is stable, $K_\text e=0$ does the job. \qed
\end{proof}

One should be careful not to conclude from the proof of Corollary~\ref{cor:unifda} that only $K_\text e=0$ can be used to guarantee internal stability. The case of $K_\text e=0$ effectively decouples all the agents leading only to a ``trivial'' coordination (i.e.\ all agents converge to the origin). One can design edge controllers with additional external inputs to drive the relative states $\tilde y$ to non-trivial solutions using the methods, for example, described in \cite{SZ:17}. For non-trivial agreement among the agents, the use of an unstable edge controller is possible provided that an appropriately defined external input is fed into the system at the point $d_y$ in Fig. \ref{fig:BD_cl}.

\subsection{Diffusive control laws and unstable cancellations} \label{sec:calcc}

The formulation of Proposition~\ref{pr:fdmain} is more intuitive than that of Theorem~\ref{th:mainr}. Still, neither of them explains \emph{why} no edge controller can stabilize the system in Fig.\,\ref{fig:BD_cl} if agents share common unstable dynamics, directions counted. In this part we aim at offering explanations. We argue that a key property to this end is intrinsic \emph{unstable cancellations} between the plant and the controller.

The cascade (series) interconnection $G_2G_1$ has cancellations if $\deg(G_2G_1)<\deg(G_1)+\deg(G_2)$. In other words, cancellations mean that some parts of the dynamics (modes) of either factor disappear in the cascade. Specifically, we say that a pole of $G_1(s)$ and/or $G_2(s)$ is \emph{canceled} if its multiplicity in $G_2(s)G_1(s)$ is smaller than the sum of its multiplicities in $G_1(s)$ and $G_2(s)$.  Cancellations in the SISO case are always caused by the presence of zeros of $G_1(s)$ at the locations of poles of $G_2(s)$, or vice versa. As such, they are termed \emph{pole-zero cancellations}. The situation is more complex in the MIMO case. For example, let
\[
 G_1(s)=\frac1s\mmatrix{1&0\\0&1}\quad\text{and}\quad G_2(s)=\mmatrix{1&-1\\-1&1},
\]
with $\deg(G_1)=2$ (two poles at the origin) and $\deg(G_2)=0$ (no poles). The system $G_2$ is static and thus has no zeros either. Nevertheless, the transfer function
\[
 G_2(s)G_1(s)=\frac1s\mmatrix{1&-1\\-1&1}
\]
is first order, meaning that one of the poles of $G_1(s)$ is canceled. Such cancellations, brought on by the normal rank deficiency of $G_2(s)$, are a lesser-known phenomenon.

The result below, proved in \S\ref{sec:pfcanc}, states that such cancellations are present between the plant and the controller in Fig.\,\ref{fig:BD_cl} whenever the conditions of Proposition~\ref{pr:fdmain} hold.
\begin{proposition} \label{pr:pcanc}%
 Let $P(s)$ and $K_\text e(s)$ be real rational and proper and let $\lambda\in\bar{\mathbb C}_0$ be a pole of $P(s)$.
 \begin{itemize}
 \item[i)] If \eqref{eq:thmCondir:i} holds, then $\lambda$ is canceled in $P(s)K(s)$.
 \item[ii)] If \eqref{eq:thmCondir:o} holds, then $\lambda$ is canceled in $K(s)P(s)$.
 \end{itemize}
\end{proposition}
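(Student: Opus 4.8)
The plan is to recast both cancellation claims as rank defects of state-space realizations and to discharge them with the Popov--Belevitch--Hautus (PBH) test. I would first record two structural identities of the diffusive controller that are immediate from $\ones^{\top}E=0$:
\begin{gather*}
 (\ones^{\top}\otimes I_m)K(s)=0\quad\text{and}\\
 K(s)(\ones\otimes I_p)=0,\quad\forall s.
\end{gather*}
Fixing minimal realizations $(A_P,B_P,C_P,D_P)$ of $P$ and $(A_K,B_K,C_K,D_K)$ of $K$, these identities propagate to the realization data: letting $s\to\infty$ handles the feedthrough terms $(\ones^{\top}\otimes I_m)D_K=0$ and $D_K(\ones\otimes I_p)=0$, after which controllability and observability of the minimal $K$ yield $(\ones^{\top}\otimes I_m)C_K=0$ and $B_K(\ones\otimes I_p)=0$. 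In parallel I would translate the hypotheses into the eigenstructure of $P$: by Proposition~\ref{pr:fdmain}, condition \eqref{eq:thmCondir:i} supplies $v\ne0$ with $v^{\top}M_i(\lambda)=0$ for all $i$, so $\ones\otimes v$ is an input pole direction of $P$ at $\lambda$ and there is a left eigenvector $\eta$ of $A_P$ with $\eta^{\top}A_P=\lambda\eta^{\top}$ and $\eta^{\top}B_P=(\ones\otimes v)^{\top}$; dually, \eqref{eq:thmCondir:o} supplies $w\ne0$ and a right eigenvector $\zeta$ with $A_P\zeta=\lambda\zeta$ and $C_P\zeta=\ones\otimes w$.

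For part~(i) I would form the cascade realization of $PK$, with state matrix $\smat{A_P&B_PC_K\\0&A_K}$ and input matrix $\smat{B_PD_K\\B_K}$, and show that the mode $\lambda$ of $A_P$ becomes uncontrollable. Writing $(\ones\otimes v)^{\top}=v^{\top}(\ones^{\top}\otimes I_m)$ and using $(\ones^{\top}\otimes I_m)C_K=0$ gives $\eta^{\top}B_PC_K=0$, so $[\,\eta^{\top}\ 0\,]$ is a left eigenvector of the cascade at $\lambda$; the same reduction applied to $D_K$ yields $[\,\eta^{\top}\ 0\,]\smat{B_PD_K\\B_K}=\eta^{\top}B_PD_K=0$, which is exactly the PBH uncontrollability certificate. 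Part~(ii) is the mirror image for $KP$: in the realization with state matrix $\smat{A_K&B_KC_P\\0&A_P}$ and output matrix $[\,C_K\ \ D_KC_P\,]$, the relation $C_P\zeta=\ones\otimes w=(\ones\otimes I_p)w$ together with $B_K(\ones\otimes I_p)=0$ makes $\smat{0\\\zeta}$ a right eigenvector at $\lambda$, and $D_K(\ones\otimes I_p)=0$ makes it unobservable. In either case a genuine $\lambda$-mode of $P$ drops out of the cascade.

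The closing step is bookkeeping: an uncontrollable (resp.\ unobservable) eigenvalue at $\lambda$ in the cascade of minimal realizations renders that realization non-minimal at $\lambda$, so the multiplicity of $\lambda$ as a pole of $PK$ (resp.\ $KP$) is strictly less than the sum of its multiplicities in $P$ and $K$ --- precisely the definition of $\lambda$ being canceled. The step I expect to be most delicate is not any single computation but the careful matching of the two descriptions of the hypotheses: verifying that the common per-agent direction $v$ (resp.\ $w$) really assembles into a bona fide pole direction $\ones\otimes v$ (resp.\ $\ones\otimes w$) of the block-diagonal aggregate $P$. This hinges on the fact that \eqref{eq:thmCondir:i} (resp.\ \eqref{eq:thmCondir:o}) forces $\lambda$ to be a pole of \emph{every} agent, and on choosing a left (resp.\ right) eigenvector of $A_P$ that realizes the aggregated direction; once this correspondence is nailed down, the PBH certificates above close the argument.
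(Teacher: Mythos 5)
Your proposal is correct and follows essentially the same route as the paper's proof: translating \eqref{eq:thmCondir:i} (resp.\ \eqref{eq:thmCondir:o}) into a left (resp.\ right) eigenvector of the block-diagonal $A_P$ whose input (resp.\ output) image is $\ones\otimes v$ (resp.\ $\ones\otimes w$), exploiting the structural identities forced by $\ones^{\top}E=0$ on the realization data of $K$, and concluding via a PBH uncontrollability\,/\,unobservability certificate for the cascade realization. The only differences are cosmetic: you spell out the dual argument for part~(ii) and the propagation of $(\ones^{\top}\otimes I_m)K(s)=0$, $K(s)(\ones\otimes I_p)=0$ to $(C_K,D_K,B_K)$ via minimality, which the paper states without detail, and your detour through Proposition~\ref{pr:fdmain} is unnecessary since \eqref{eq:thmCondir} is already phrased in terms of pole directions.
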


Unstable pole-zero cancellations between a plant and a controller are a consensual taboo in feedback control. Textbooks treat them as a kind of a cardinal sin, which shall be avoided at all costs. The reason is that canceled dynamics do not really disappear. For example, poles of a SISO plant $P(s)$ canceled by zeros of a controller $K(s)$ always show up in the closed-loop disturbance sensitivity $T_\text d(s)$, see \eqref{eq:gof}. This is the very reason to require internal stability. Unstable cancellations due to deficient normal rank are less common and less studied.  Nevertheless, they cause same repercussions. Namely, canceled dynamics shows up in at least one closed-loop relation, rendering the system prone to the effect of exogenous signals.

Assume, for example, that condition \eqref{eq:thmCondir:i}, or \eqref{eq:thmCond:rcf}, holds for some $\lambda\in\bar{\mathbb C}_0$. It follows from the proof of Theorem~\ref{th:mainr} that there is then $v\ne0$ such that \eqref{eq:cokerMcl} holds.  Therefore,
\[
 \mmatrix{0\\\mathbb1\otimes v}\in\zdiro{\mmatrix{M_K&-N_P\\-N_K&M_P}}\lambda=\pdiri{T_4}\lambda
\]
where the equality follows by Lemma~\ref{lem:GpMzDir} and the fact that the factors in \eqref{eq:gofrcf} are right coprime. By Lemma~\ref{lem:pdirm} and \eqref{eq:gof}
\[
 T_4(s)\mmatrix{0\\\mathbb1\otimes v}=\mmatrix{T_\text d(s)\\T(s)}(\mathbb1\otimes v)
\]
has an unstable pole at $s=\lambda$. In other words, there is a load disturbance $d_u$ in Fig.~\ref{fig:BD_cl} such that either $y$ or $u$ or both is unbounded. Likewise, it can be shown that if \eqref{eq:thmCondir:o} holds, then $\mmatrix{S&T_\text d}\not\in H_\infty$, i.e.\ $d_u$ or/and $d_y$ might cause an unbounded $y$. This explains why the consensus protocol in \S\ref{sec:intro:mot} has an unstable load disturbance response.

It can be shown that if the consensus discussed in \S\ref{sec:intro:mot} can be attained, then all components of $T_4$ but $T_\text d$ are stable, whereas $T_\text d(s)$ has a pole at the origin. This agrees with the situation in SISO pole-zero cancellations discussed above. However, $T_\text d$ is not necessarily unstable in a general MIMO case if either of the conditions in \eqref{eq:thmCondir} holds. The example below illustrates a different scenario.
\begin{example} \label{ex:T422us}%
 Consider a system with $\nu=2$ agents
 \[
  P_1(s)=\mmatrix{s/(s+1)&0\\1/s&1}
  \quad\text{and}\quad
  P_2(s)=\mmatrix{1/s&0\\1&s/(s+1)}
 \]
 (both are second order). In this case there is only one edge. Select
 \[
  K_\text e(s)=K_{\text e,1}(s)=-\frac13\mmatrix{1&0\\0&2/s}.
 \]
 It is then a matter of routine calculations to see that $S$, $T_\text d$, and $T_\text c$ are stable, each having $(s+2)(2s+1)^2(3s+1)$ as the lowest common denominator of its entries. However, $T(s)$ has a pole at the origin in addition, rendering the whole $T_4$ unstable.
\end{example}
Moreover, it may even happen that canceled dynamics of $P$ are not excited by the (load) disturbance $d_u$, but rather only by $d_y$.
\begin{example} \label{ex:T421us}%
 Consider a system with $\nu=2$ agents, yet again, now with the second order
 \[
  P_1(s)=\mmatrix{s/(s+1)&1/s\\0&1},
  \quad
  P_2(s)=\mmatrix{1/s&1\\0&s/(s+1)}
 \]
 and the edge controller from Example~\ref{ex:T422us}. It can be calculated that in this case $T$, $T_\text d$, and $T_\text c$ are stable, each having $(s+2)(2s+1)^2(3s+1)$ as the lowest common denominator of its entries. The sensitivity $S(s)$ has an additional pole at the origin. This implies that the responses to $d_u$ are all stable, whereas the response of $y$ to $d_y$ is unstable.
\end{example}

\begin{remark} \label{rem:stcFig3}%
 Stabilizability conditions for the setup in Fig.~\ref{fig:BD_es} would be substantially different from those in Theorem~\ref{th:mainr} or Proposition~\ref{pr:fdmain}. If we consider the class of LTI edge controllers $K_\text e$, then the stabilizability problem boils down to the question of existing decentralized fixed modes (DFMs) in $P_\text e$ defined by \eqref{eq:Pe}, see \cite[Sec.\,2.2]{DAM:20}. If controllers are allowed to be periodically time-varying, then even this condition is not restrictive \cite{AM:81}. However, this analysis has a snag in that the very construction of $P_\text e$ might have unstable cancellations. For example, return to the case of $\nu=3$ integrator agents with an indirect star interconnection graph discussed in Section~\ref{sec:setup}. In this case $P(s)=(1/s)I_3$ has three poles at the origin, whereas
 \[
  P_\text e(s)
   =\mmatrix{1&0&-1\\0&1&-1}\Bigl(\frac1sI_3\Bigr)\mmatrix{1&0\\0&1\\-1&-1}
   =\frac1s\mmatrix{2&1\\1&2}
 \]
 is a second-order transfer function. This $P_\text e$ is easily stabilizable by decentralized edge controllers, e.g.\ by $K_\text e=-I_2$. But this controller cannot see the canceled unstable mode, which remains a part of the closed-loop system.
\end{remark}

\subsection{Proof of Proposition~\ref{pr:pcanc}} \label{sec:pfcanc}

Bring in \emph{minimal} realizations
\[
 P_i(s)=\mmatrix[c|c]{A_i&B_i\\\hline C_i&D_i}
 \quad\text{and}\quad
 K(s)=\mmatrix[c|c]{A_K&B_K\\\hline C_K&D_K}
\]
so the realization
\[
 P(s)=\mmatrix[c|c]{A_P&B_P\\\hline C_P&D_P}\coloneq\mmatrix[c|c]{\diag\{A_i\}&\diag\{B_i\}\\\hline\diag\{C_i\}&\diag\{D_i\}}
\]
is also minimal. To prove the first item of the Proposition it is then sufficient to
show that $\lambda$ is an uncontrollable mode of
\[
 P(s)K(s)=\mmatrix[cc|c]{A_K&0&B_K\\B_PC_K&A_P&B_PD_K\\\hline D_PC_K&C_P&D_PD_K}.
\]
To this end, note that \eqref{eq:K} implies $(\mathbb1\otimes I)^\top\mmatrix{C_K&D_K}=0$
and condition \eqref{eq:thmCondir:i} is equivalent to the existence of
$0\ne v\in\mathbb C^m$ such that $v=B_i^\top\eta_i$ for some $\eta_i$ such that
$\eta_i^\top(\lambda I-A_i)=0$.  The latter is equivalent to the existence of
$\eta\ne0$ such that
\[
 \eta^\top(\lambda I-A_P)=0\quad\text{and}\quad\eta^\top B_P=(\mathbb1\otimes v)^\top
\]
for some $v\ne0$. Therefore,
\begin{multline*}
 \mmatrix{0&\eta^\top}\mmatrix{A_K-\lambda I&0&B_K\\B_PC_K&A_P-\lambda I&B_PD_K} \\
  =v^\top(\mathbb1\otimes I)^\top\mmatrix{C_K&0&D_K}=0
\end{multline*}
and the PBH test for the realization of $PK$ fails for the mode at $\lambda$, proving
the first item. The second item follows by similar arguments. \qed


\section{Concluding remarks} \label{sec:concr}

In this paper we have studied the internal stability of multi-agent systems controlled by diffusively coupled laws. We have argued that internal stability, with entry points of exogenous signals at the connections between the agents and the controller, is a vital property in multi-agent systems and have proved that it can never be attained if the agents share common unstable dynamics, directions counted. In particular, this class always includes the case of homogeneous unstable agents or heterogeneous SISO agents with a common unstable pole, like an integral action. We have shown that the underlying reason for the lack of stabilizability is intrinsic cancellations of aligned unstable dynamics of agents by the diffusive coupling mechanism.

An immediate outcome of the proposed analysis is that the uniformity must be broken in the control of unstable multi-agent systems. This is the underlying reason behind several of the different assumptions mentioned in Section~\ref{sec:intro}. Introducing a leader, or ``virtual'' agent, can potentially break the common instability, while permitting non-relative feedback either locally stabilize the agents or again, break the uniformity.



\appendix
\gdef\thesection{\Alph{section}}
\makeatletter
\renewcommand\@seccntformat[1]{\appendixname\csname the#1\endcsname.\hspace{0.5em}}
\makeatother

\section{Coprime factorizations over $\boldsymbol{H_\infty}$} \label{sec:coprf}

In this Appendix, basic coprime factorization results that are required in the paper are presented. A comprehensive exposition of the subject can be found in \cite{V:85}.

Functions $M\in H_\infty^{m\times m}$ and $N\in H_\infty^{p\times m}$ are said to be \emph{right coprime} if there are $X\in H_\infty^{m\times m}$ and $Y\in H_\infty^{m\times p}$ (Bézout coefficients) such that
\begin{subequations} \label{eq:BezE}
\begin{equation} \label{eq:BezE:r}
 XM+YN=I_m.
\end{equation}
Functions $\tilde M\in H_\infty^{p\times p}$ and $\tilde N\in H_\infty^{p\times m}$ are said to be \emph{left coprime} if there are $\tilde X\in H_\infty^{p\times p}$ and $\tilde Y\in H_\infty^{m\times p}$ such that
\begin{equation} \label{eq:BezE:l}
 \tilde M\tilde X+\tilde N\tilde Y=I_p.
\end{equation}
\end{subequations}
A transfer function $G(s)$ is said to have coprime factorizations over $H_\infty$ if there are right coprime $M_G,N_G\in H_\infty$ and left coprime $\tilde M_G,\tilde N_G\in H_\infty$, known as right and left coprime factors of $G$, respectively, such that
\begin{equation} \label{eq:Gcf}
 G=N_GM_G^{-1}=\tilde M_G^{-1}\tilde N_G.
\end{equation}
Coprime factors are unique up to post- or pre-multiplication by bi-stable transfer functions for right and left factors, respectively.

\begin{lemma} \label{lem:GstM}%
 If\/ $G(s)$ has coprime factorizations, then
 \[
  G\in H_\infty\iff M_G^{-1}\in H_\infty\iff\tilde M_G^{-1}\in H_\infty.
 \]
\end{lemma}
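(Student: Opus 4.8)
The plan is to establish the two equivalences $G\in\Hinf\iff M_G^{-1}\in\Hinf$ and $G\in\Hinf\iff\tilde M_G^{-1}\in\Hinf$ separately, each by a short algebraic manipulation built on the defining relations \eqref{eq:Gcf} together with the Bézout identities \eqref{eq:BezE}. Chaining the two then yields the stated triple equivalence.

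First I would treat the right factor. The implication $M_G^{-1}\in\Hinf\Rightarrow G\in\Hinf$ is immediate: since $G=N_GM_G^{-1}$ with $N_G\in\Hinf$, the product of two \Hinf\ functions is again in \Hinf. For the converse I would invoke the right Bézout identity \eqref{eq:BezE:r}, namely $XM_G+YN_G=I_m$ with $X,Y\in\Hinf$, and multiply it on the right by $M_G^{-1}$ (which exists in the quotient field of \Hinf). This produces $X+YN_GM_G^{-1}=M_G^{-1}$, i.e.\ $M_G^{-1}=X+YG$. As $X,Y,G\in\Hinf$, the right-hand side lies in \Hinf, whence $M_G^{-1}\in\Hinf$.

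The left factor is handled by the mirror-image argument, using the left Bézout identity \eqref{eq:BezE:l} and the factorization $G=\tilde M_G^{-1}\tilde N_G$. The trivial direction again follows because $\tilde M_G^{-1}\tilde N_G$ is a product of \Hinf\ functions, while for the converse one multiplies $\tilde M_G\tilde X+\tilde N_G\tilde Y=I_p$ on the left by $\tilde M_G^{-1}$ to obtain $\tilde M_G^{-1}=\tilde X+G\tilde Y$, which is in \Hinf.

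There is no genuine obstacle here; the only point deserving care is the bookkeeping of sides. One must multiply the right Bézout identity on the right and the left Bézout identity on the left, so that the respective factorization $G=N_GM_G^{-1}$ or $G=\tilde M_G^{-1}\tilde N_G$ is exposed in the correct order. Coprimeness enters solely through the existence of the Bézout coefficients in \Hinf, and this existence is exactly what drives the two nontrivial implications.
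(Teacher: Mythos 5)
Your proof is correct and matches the paper's own argument essentially verbatim: the easy directions follow from the factorizations $G=N_GM_G^{-1}=\tilde M_G^{-1}\tilde N_G$, and the nontrivial directions come from rewriting the Bézout identities as $M_G^{-1}=X+YG$ and $\tilde M_G^{-1}=\tilde X+G\tilde Y$. Your care about multiplying on the correct side is exactly the right bookkeeping, and nothing is missing.
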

\begin{proof}
 The ``if'' part of the first equivalence relation is immediate from \eqref{eq:Gcf}. Its ``only if'' part follows from rewriting the Bézout equality \eqref{eq:BezE:r} as $M_G^{-1}=X_G+Y_GG$. The second relation follows by similar arguments. \qed
\end{proof}

\begin{lemma} \label{lem:GpMz}%
 Let $G(s)$ have coprime factorizations. If\/ $\lambda\in\bar{\mathbb C}_0$ is a pole of\/ $G(s)$, then $M_G(\lambda)$ and $\tilde M_G(\lambda)$ are singular.
\end{lemma}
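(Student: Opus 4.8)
The plan is to argue by contraposition: rather than show directly that a pole forces $M_G(\lambda)$ to be singular, I would show that if $M_G(\lambda)$ is \emph{non}singular then $\lambda$ cannot be a pole of $G$, and dually for $\tilde M_G$. Only the right factorization $G=N_GM_G^{-1}$ is needed for the claim about $M_G$, and the left factorization $G=\tilde M_G^{-1}\tilde N_G$ for the claim about $\tilde M_G$. Worth noting is that coprimeness itself is not used in this direction; it merely guarantees that the factorizations exist.

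First I would record that $M_G,N_G\in\Hinf$ are holomorphic at $\lambda$. For $\lambda\in\mathbb C_0$ this is immediate, since elements of \Hinf\ are holomorphic on the open right half-plane; for $\lambda$ on the imaginary axis it holds in the finite-dimensional case of interest, where $M_G,N_G$ are stable real-rational and hence free of poles throughout $\wbar{\mathbb C}_0$ (in the general \Hinf\ setting one appeals to the corresponding boundary behaviour).

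Next I would suppose, toward a contradiction, that $M_G(\lambda)$ is nonsingular, i.e.\ $\det M_G(\lambda)\ne0$. Since $\det M_G(s)$ is holomorphic and nonvanishing at $\lambda$, it stays nonzero on a neighbourhood of $\lambda$, and Cramer's rule $M_G^{-1}=\mathrm{adj}(M_G)/\det M_G$ then shows that $M_G(s)^{-1}$ is holomorphic there as well. Consequently $G=N_GM_G^{-1}$ is a product of functions holomorphic at $\lambda$, hence itself holomorphic at $\lambda$, so $\lambda$ is not a pole of $G$ --- contradicting the hypothesis. This forces $M_G(\lambda)$ to be singular. Applying the identical argument to the left factorization $G=\tilde M_G^{-1}\tilde N_G$, with $\tilde M_G^{-1}=\mathrm{adj}(\tilde M_G)/\det\tilde M_G$, yields that $\tilde M_G(\lambda)$ is singular as well.

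The only delicate point I anticipate is justifying that $M_G$ and $\tilde M_G$ are genuinely holomorphic (or at least boundedly invertible) at $\lambda$ when $\lambda$ lies on the imaginary axis, since \Hinf\ functions are a priori only holomorphic on the open half-plane. In the real-rational, finite-dimensional regime this is a non-issue, and that is the setting in which the lemma is ultimately invoked.
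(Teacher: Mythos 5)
Your proposal is correct and takes essentially the same route as the paper's own proof: the paper also argues by contraposition, observing that if $M_G(\lambda)$ were nonsingular then $N_G(\lambda)M_G(\lambda)^{-1}$ would be bounded, so $\lambda$ could not be a pole of $G(s)$, with the dual argument for $\tilde M_G$. Your Cramer's-rule elaboration and the caveat about imaginary-axis points are just a more detailed rendering of the same one-line argument (the paper additionally remarks that singularity of $M_G(\lambda)$ for $\lambda\in\wbar{\mathbb C}_0$ is independent of the particular factorization chosen, a well-definedness point rather than a step in the argument).
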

\begin{proof}
 Because $\lambda\in\bar{\mathbb C}_0$, the singularity of $M_G(\lambda)$ or $\tilde M_G(\lambda)$ does not depend on concrete factorizations taken. If $M_G(\lambda)$ is nonsingular, then $N_G(\lambda)M_G(\lambda)^{-1}$ is bounded, which implies that $\lambda$ cannot be a pole of $G(s)$. The proof for $\tilde M_G$ is similar. \qed
\end{proof}

\section{Poles, zeros, and their directions} \label{sec:pzdir}

This \appendixname collects some definitions and facts on poles, zeros, and their directions for MIMO transfer functions. More details can be found in \cite{SkP:05}, although we use slightly different definitions of directions (subspaces, rather than vectors), in line with \cite{Mir:lcs}.

Let $G$ be a finite-dimensional LTI system having a proper transfer function $G(s)$. The system $G$ has a state-space realization
\begin{equation} \label{eq:ssr}
 G(s)=\mmatrix[c|c]{A&B\\\hline C&D}\coloneq D+C(sI-A)^{-1}B.
\end{equation}
The eigenvalues of $A$ are known as \emph{poles} of the realization \eqref{eq:ssr}. The set of all realization poles, multiplicities counted, coincides with that of the poles of the transfer function $G(s)$ if and only if the realization is minimal. \emph{Invariant zeros} of the realization \eqref{eq:ssr} are defined as the points $\lambda\in\mathbb C$ at which
\[
 \rank\mmatrix{A-\lambda I&B\\C&D}<\nrank\mmatrix{A-sI&B\\C&D}
\]
(the matrix polynomial of $s$ in the right-hand side is dubbed the Rosenbrock system matrix). The set of all invariant zeros comprises transmission zeros of the transfer function $G(s)$ and hidden modes of realization \eqref{eq:ssr}.

Poles and zeros have (spatial) directions for MIMO systems. Assume through the rest of this \appendixname that the realization in \eqref{eq:ssr} is minimal. By \emph{input} and \emph{output} directions of a realization pole $\lambda$ of \eqref{eq:ssr}, we understand the subspaces
\begin{subequations} \label{eq:pdir}
\begin{align}
 \pdiri G\lambda&\coloneq B^\top\ker(\lambda I-A)^\top\subset\mathbb C^m \label{eq:pdir:i} \\
 \shortintertext{and}
 \pdiro G\lambda&\coloneq C\ker(\lambda I-A)\subset\mathbb C^p, \label{eq:pdir:o}
\end{align}
\end{subequations}
respectively. If $\lambda$ is not a pole of $G(s)$, then both definitions in \eqref{eq:pdir} result in the trivial subspace $\{0\}$.
\begin{lemma} \label{lem:pdirm}%
 If\/ $\lambda\in\mathbb C$ is a pole of $G(s)$, then
 \begin{itemize}
 \item[i)] $\lambda$ is a pole of $G(s)v$ whenever $0\ne v\in\pdiri G\lambda$,
 \item[ii)] $\lambda$ is a pole of $v^\top G(s)$ whenever $0\ne v\in\pdiro G\lambda$.
 \end{itemize}
\end{lemma}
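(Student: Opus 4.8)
The plan is to reduce each statement to a single-vector system and then run a Popov--Belevitch--Hautus (PBH)-type argument, exploiting the minimality of the realization \eqref{eq:ssr}. I would prove item (i) in detail and obtain item (ii) by dual arguments.

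First I would unwind the definition \eqref{eq:pdir:i}: a nonzero $v\in\pdiri G\lambda$ means $v=B^\top\eta$ for some $\eta$ with $(\lambda I-A)^\top\eta=0$, i.e.\ $\eta^\top(\lambda I-A)=0$ (recall that $\top$ is the conjugate transpose, so $\eta^\top$ is a left eigenvector of $A$ at $\lambda$). The whole point of the construction is that $\eta^\top B=v^\top$, whence $\eta^\top Bv=v^\top v=\|v\|^2\neq0$. Writing $x(s)\coloneq(sI-A)^{-1}Bv$, the identity $\eta^\top(sI-A)=(s-\lambda)\eta^\top$ gives $\eta^\top x(s)=\|v\|^2/(s-\lambda)$, so $x(s)$ has a pole at $\lambda$; let $r\geq1$ be its order and $x_{-r}\neq0$ its leading Laurent coefficient.

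Next I would show that $x_{-r}$ is a genuine eigenvector of $A$. Splitting $sI-A=(\lambda I-A)+(s-\lambda)I$ in the identity $(sI-A)x(s)=Bv$ and matching the coefficient of $(s-\lambda)^{-r}$ yields $(\lambda I-A)x_{-r}=0$. Since the realization is minimal, $(A,C)$ is observable, and therefore no eigenvector of $A$ lies in $\ker C$; hence $Cx_{-r}\neq0$. Consequently the coefficient of $(s-\lambda)^{-r}$ in $G(s)v=Dv+Cx(s)$ equals $Cx_{-r}\neq0$, so $\lambda$ is a pole of $G(s)v$, as claimed. For item (ii) I would run the mirror image: for $v=C\xi$ with $(\lambda I-A)\xi=0$ (so $v\in\pdiro G\lambda$ via \eqref{eq:pdir:o}), I consider the row function $(v^\top C)(sI-A)^{-1}$, note it has a pole at $\lambda$ because $v^\top C\xi=\|v\|^2\neq0$, identify its leading coefficient as a left eigenvector, and invoke controllability of $(A,B)$ (again from minimality) to conclude that this coefficient is not annihilated by $B$, so $v^\top G(s)$ inherits the pole.

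The routine part is the conjugate-transpose bookkeeping that makes $\eta^\top Bv$ come out exactly as $\|v\|^2$. The one genuinely delicate step, and where I expect the main obstacle, is the passage from ``$x(s)$ has a pole'' to ``$Cx(s)$ has a pole'' when $\lambda$ is a repeated or defective eigenvalue: a priori $C$ could cancel the blow-up, and the blow-up may occur along a \emph{generalized} eigenvector. This is precisely resolved by observing that the \emph{leading} Laurent coefficient $x_{-r}$ is a true eigenvector and by using observability to keep it out of $\ker C$; without the minimality of \eqref{eq:ssr} the statement would simply fail.
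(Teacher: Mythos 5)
Your proof is correct, and it takes a genuinely different route from the paper's. The paper argues by contradiction via a Kalman controllability decomposition adapted to the input $Bv$: writing $(A,B)$ in a basis in which $(A_{\mathrm c},B_{\mathrm c}v)$ is controllable and $B_{\bar{\mathrm c}}v=0$, it observes that $\lambda$ can fail to be a pole of $G(s)v$ only if $\lambda\notin\spec(A_{\mathrm c})$, in which case every left eigenvector of $A$ at $\lambda$ is supported on the uncontrollable block, forcing $v\in\im B_{\bar{\mathrm c}}^\top=(\ker B_{\bar{\mathrm c}})^\perp$, which contradicts $B_{\bar{\mathrm c}}v=0$. You instead argue directly in the frequency domain: the left eigenvector $\eta$ with $B^\top\eta=v$ certifies that $x(s)=(sI-A)^{-1}Bv$ blows up at $\lambda$ (since $\eta^\top Bv=\|v\|^2\ne0$), the leading Laurent coefficient $x_{-r}$ is shown to be a genuine eigenvector of $A$, and PBH observability keeps it out of $\ker C$. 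Your route buys two things: it is constructive (it exhibits the exact pole order $r$ and the nonzero leading coefficient $Cx_{-r}$ of $G(s)v$), and it sidesteps a step the paper leaves implicit, namely that observability of $(C,A)$ descends to the controllable-part realization, which is what equates the poles of $G(s)v$ with $\spec(A_{\mathrm c})$. The paper's route is shorter once one is willing to quote standard facts about minimal realizations, and its controllable/uncontrollable dichotomy makes the role of the hypothesis $v\in\pdiri G\lambda$ geometrically transparent. Both arguments hinge on minimality, and your closing remark about defective eigenvalues identifies precisely the trap that the leading-coefficient device avoids.
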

\begin{proof}
 Bring in a minimal realization of $G$ as in \eqref{eq:ssr}. If $(A,Bv)$ is controllable, then every eigenvalue of $A$ is a pole of $G(s)v$, by the observability of $(C,A)$. If $(A,Bv)$ is uncontrollable, without loss of generality we may assume that
 \[
  (A,B)=\left(\mmatrix{A_\text c&A_{12}\\0&A_\text{\=c}},\mmatrix{B_\text c\\B_\text{\=c}}\right)
 \]
 with controllable $(A_\text c,B_\text cv)$ and $B_\text{\=c}v=0$. In this case $\lambda$ is not a pole of $G(s)v$ iff $\lambda\not\in\spec(A_\text c)$. So assume that $\lambda\not\in\spec(A_\text c)$, which implies that $\lambda\in\spec(A_\text{\=c})$ and that
 \[
  B^\top\ker(\lambda I-A)^\top\subset\mmatrix{B_\text c^\top&B_\text{\=c}^\top}\im\mmatrix{0\\I}=\im B_\text{\=c}^\top.
 \]
 But then $v\in\pdiri G\lambda\implies v\in\im B_\text{\=c}^\top=(\ker B_\text{\=c})^\perp$, which contradicts the condition $B_\text{\=c}v=0$. Hence, $\lambda$ must be a pole of $G(s)v$. The second item follows by similar arguments. \qed
\end{proof}
Input and output directions of an invariant zero $\lambda$ are defined as%
\begin{subequations} \label{eq:zdir}
\begin{align}
 \zdiri G\lambda&\coloneq\mmatrix{0&I_m}\ker\mmatrix{A-\lambda I&B\\C&D}\subset\mathbb C^m \label{eq:zdir:i} \\
 \shortintertext{and}
 \zdiro G\lambda&\coloneq\mmatrix{0&I_p}\ker\mmatrix{A-\lambda I&B\\C&D}{\vphantom{\mmatrix{1\\[-5pt]1}}}^\top\!\subset\mathbb C^p, \label{eq:zdir:o}
\end{align}
\end{subequations}
respectively. With some abuse of notation we use the definitions in \eqref{eq:zdir} also if $\lambda$ is not an invariant zero of \eqref{eq:ssr}, but the normal rank of $G(s)$ is deficient. For example, in our notation
\[
 \zdiri{\mmatrix{1&-1\\-1&1}}\lambda=\zdiro{\mmatrix{1&-1\\-1&1}}\lambda=\im\mathbb1_2
\]
for all $\lambda\in\mathbb C$. In such situations directions are understood as normal null spaces.

\begin{lemma} \label{lem:zerker}%
 If $\lambda\not\in\spec(A)$, then it is an invariant zero of $G$ iff\/ $\rank G(\lambda)<\nrank G(s)$ and
 \[
  \zdiri G\lambda=\ker G(\lambda)
  \quad\text{and}\quad
  \zdiro G\lambda=\ker\,[G(\lambda)]^\top.
 \]
\end{lemma}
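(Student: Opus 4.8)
The plan is to exploit the Schur complement relating the Rosenbrock system matrix $R(s)\coloneq\smat{A-sI&B\\C&D}$ to the transfer function $G(s)$. The crucial observation is that the hypothesis $\lambda\notin\spec(A)$ renders $A-\lambda I$ invertible, so the Schur complement of the $(1,1)$ block of $R(\lambda)$ is precisely $D+C(\lambda I-A)^{-1}B=G(\lambda)$. This single identity drives both halves of the statement.

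First I would settle the rank bookkeeping. Since $A-\lambda I$ is invertible, block elimination gives $\rank R(\lambda)=n+\rank G(\lambda)$, where $n$ is the size of $A$. The identical computation holds at every $s\notin\spec(A)$, a cofinite and hence dense set, so passing to maxima yields $\nrank R(s)=n+\nrank G(s)$. Cancelling the common $n$ converts the defining inequality $\rank R(\lambda)<\nrank R(s)$ into $\rank G(\lambda)<\nrank G(s)$, which is the asserted equivalence for $\lambda$ to be an invariant zero.

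Next I would compute the kernels. A vector $\smat{x\\u}$ lies in $\ker R(\lambda)$ iff $(A-\lambda I)x+Bu=0$ and $Cx+Du=0$; the first equation forces $x=(\lambda I-A)^{-1}Bu$, and substituting into the second collapses it to $G(\lambda)u=0$. Hence the projection $\smat{0&I_m}$ maps $\ker R(\lambda)$ onto $\ker G(\lambda)$, and it does so bijectively because $x$ is uniquely recovered from $u$; this gives $\zdiri G\lambda=\ker G(\lambda)$. The output formula follows by running the same argument on $[R(\lambda)]^\top=\smat{(A-\lambda I)^\top&C^\top\\B^\top&D^\top}$, whose Schur complement is $[G(\lambda)]^\top$. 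Here $\lambda\notin\spec(A)$ again guarantees invertibility of the $(1,1)$ block, and the identical elimination yields $\zdiro G\lambda=\ker\,[G(\lambda)]^\top$.

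The argument is essentially routine once the Schur complement is in hand; minimality of the realization plays no role. The only point requiring genuine care is the normal-rank identity $\nrank R(s)=n+\nrank G(s)$: one must argue that the generic rank of $R(s)$ is attained \emph{off} the finite set $\spec(A)$, where the Schur formula is valid, rather than at its points, so that the subtraction of $n$ is legitimate for the normal rank and not merely for a pointwise evaluation.
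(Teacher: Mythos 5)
Your proposal is correct and follows essentially the same route as the paper: the paper's proof consists precisely of the two block-triangular factorizations of the Rosenbrock matrix $\smat{A-\lambda I&B\\C&D}$ that encode your Schur-complement elimination, from which the rank identity $\rank R(\lambda)=n+\rank G(\lambda)$ and the kernel projections follow. Your write-up merely makes explicit the details the paper leaves implicit, including the worthwhile observation that the normal-rank comparison must be made at points off $\spec(A)$.
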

\begin{proof}
 Follows from the relations
 \begin{align*}
  \mmatrix{A-\lambda I&B\\C&D}
   &=\mmatrix{A-\lambda I&0\\C&G(\lambda)}\mmatrix{I&(A-\lambda I)^{-1}B\\0&I} \\
   &=\mmatrix{I&0\\C(A-\lambda I)^{-1}&I}\mmatrix{A-\lambda I&B\\0&G(\lambda)}
 \end{align*}
 and the assumed invertibility of $A-\lambda I$. \qed
\end{proof}

\begin{lemma} \label{lem:GpMzDir}%
 If $\lambda\in\bar{\mathbb C}_0$, then it is a pole of $G(s)$ if and only if it is a zero of the denominators $M_G(s)$ and $\tilde M_G(s)$ of its coprime factorizations.  Moreover,
 \[
  \pdiri G\lambda=\zdiro{M_G}\lambda\quad\!\!\text{and}\!\quad\pdiro G\lambda=\zdiri{\smash{\tilde M_G}}\lambda
 \]
 in this case.
\end{lemma}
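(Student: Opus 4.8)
The plan is to establish the right-factor claims---the pole\,/\,zero equivalence for $M_G$ and the identity $\pdiri G\lambda=\zdiro{M_G}\lambda$---and then obtain the statements for $\tilde M_G$ and $\pdiro G\lambda$ by transposing the entire argument, replacing the state-feedback factorization by an observer-based one. Since right and left coprime factors are unique up to post- and pre-multiplication by units of \Hinf, and a unit is invertible throughout $\wbar{\mathbb C}_0$, the subspaces $\ker[M_G(\lambda)]^\top$ and $\ker\tilde M_G(\lambda)$ are independent of the particular factorization chosen. It therefore suffices to verify the identities for one convenient factorization.

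First I would fix a minimal realization $G(s)=\smat{A&B\\C&D}$ and, using controllability, choose $F$ so that $A+BF$ is stable, giving the standard right coprime denominator $M_G(s)=I+F(sI-A-BF)^{-1}B$. Because $A+BF$ is stable and $\lambda\in\wbar{\mathbb C}_0$, the point $\lambda$ is not a pole of $M_G$, so Lemma~\ref{lem:zerker} applies to $M_G$ and yields $\zdiro{M_G}\lambda=\ker[M_G(\lambda)]^\top$. Everything then reduces to computing this kernel.

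The core step is the following calculation. A vector $w$ lies in $\ker[M_G(\lambda)]^\top$ iff $w+B^\top(\lambda I-(A+BF)^\top)^{-1}F^\top w=0$. Setting $\xi\coloneq(\lambda I-(A+BF)^\top)^{-1}F^\top w$ gives $w=-B^\top\xi$, and substituting into $(\lambda I-(A+BF)^\top)\xi=F^\top w$ and cancelling the common term $F^\top B^\top\xi$ leaves the eigenvector relation $(\lambda I-A)^\top\xi=0$. Conversely, every $\xi\in\ker(\lambda I-A)^\top$ produces $w=-B^\top\xi\in\ker[M_G(\lambda)]^\top$ by reversing the steps, the resolvent being well defined because $\lambda\notin\spec(A+BF)$. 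Hence $\ker[M_G(\lambda)]^\top=B^\top\ker(\lambda I-A)^\top=\pdiri G\lambda$, which is the asserted identity.

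Finally I would read off the pole\,/\,zero equivalence from this identity. If $\lambda$ is not a pole of $G$, the convention following \eqref{eq:pdir} gives $\pdiri G\lambda=\{0\}$, so $M_G(\lambda)$ is nonsingular. Conversely, if $\lambda$ is a pole, then minimality makes $(A,B)$ controllable, so by the PBH test no nonzero left eigenvector $\xi$ of $A$ at $\lambda$ satisfies $B^\top\xi=0$; thus $\pdiri G\lambda\ne\{0\}$ and $M_G(\lambda)$ is singular, i.e.\ $\lambda$ is a zero of $M_G$. The dual argument with $\tilde M_G(s)=I+C(sI-A-HC)^{-1}H$, where $A+HC$ is stable, gives $\ker\tilde M_G(\lambda)=C\ker(\lambda I-A)=\pdiro G\lambda=\zdiri{\tilde M_G}\lambda$ and, via observability, the pole\,/\,zero equivalence for $\tilde M_G$. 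The only place demanding genuine care is the kernel computation of the third paragraph---checking both inclusions and the nondegeneracy step that invokes controllability\,/\,observability through PBH; the remainder is bookkeeping.
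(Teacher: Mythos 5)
Your proof is correct, but it does not follow the paper's route: the paper disposes of this lemma in one line, by citing an external result (Prop.\,4.16 of the lecture-notes reference \cite{Mir:lcs}) together with the observation that a pole in $\wbar{\mathbb C}_0$ is a zero of \emph{every} denominator, whereas you give a self-contained state-space argument. Concretely, you (i) reduce to a single factorization via uniqueness of coprime factors up to units, which are invertible throughout $\wbar{\mathbb C}_0$ in the real-rational setting, so $\ker[M_G(\lambda)]^\top$ and $\ker\tilde M_G(\lambda)$ are factorization-independent; (ii) instantiate the stabilizing-feedback denominator $M_G(s)=I+F(sI-A-BF)^{-1}B$, use Lemma~\ref{lem:zerker} to identify $\zdiro{M_G}\lambda$ with $\ker[M_G(\lambda)]^\top$ (legitimate because $M_G$ is stable, hence $\lambda$ is not a pole of it), and compute that kernel to be $B^\top\ker(\lambda I-A)^\top=\pdiri G\lambda$; and (iii) extract the pole\,/\,zero equivalence from the PBH test and minimality, with the dual observer-based factorization handling $\tilde M_G$. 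The kernel computation is sound in both directions: the resolvent exists because $\lambda\notin\spec(A+BF)$, and the term involving $F^\top B^\top\xi$ cancels exactly as you claim. Two caveats, neither a genuine gap: first, you invoke without proof that the feedback\,/\,observer construction actually yields \emph{coprime} factors --- a standard fact (explicit B\'ezout coefficients exist using an observer gain), but one the paper's appendices do not themselves contain, so your proof trades the paper's external citation for a different, equally standard, external fact; second, since the paper's $A^\top$ denotes the \emph{conjugate} transpose, expressions such as $(\lambda I-(A+BF)^\top)^{-1}$ should be read as the adjoint of the whole resolvent (equivalently, phrase the computation with left kernels $w^\top M_G(\lambda)=0$), which is pure bookkeeping and affects nothing. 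What your approach buys is transparency: every step is checkable against the paper's own Appendix B definitions and Lemma~\ref{lem:zerker}. What the paper's citation buys is brevity and independence from any particular state-space construction.
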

\begin{proof}
 Follows by \cite[Prop.\,4.16]{Mir:lcs} and the fact that a pole of $G(s)$ in $\bar{\mathbb C}_0$ is a zero of all possible denominators. \qed
\end{proof}

\end{document}